\pdfoutput=1
\documentclass[a4paper,11pt]{amsart}
\usepackage{amsmath}
\usepackage{amsaddr}
\usepackage{amssymb}
\usepackage{hyperref}
\usepackage{cite}
\usepackage{graphicx}
\usepackage{subfig}

\newcommand{\Tr}{\mathrm{Tr}}
\newcommand{\tr}{\Tr}

\newcommand{\ket}[1]{\ensuremath{|#1\rangle}}
\newcommand{\bra}[1]{\ensuremath{\langle#1|}}
\newcommand{\ketbra}[2]{\ensuremath{\ket{#1}\bra{#2}}}
\newcommand{\proj}[1]{\ensuremath{\ket{#1}\bra{#1}}}
\newcommand{\braket}[2]{\ensuremath{\langle{#1}|{#2}\rangle}}
\newcommand{\kett}[1]{\ensuremath{|#1\rangle\rangle}}

\newcommand{\LL}{\mathcal{L}}

\newcommand{\1}{{\rm 1\hspace{-0.9mm}l}}
\newcommand{\Id}{\1}
\newcommand{\ii}{\mathrm{i}}
\newcommand{\dd}{\mathrm{d}}

\newcommand{\transpose}{\ensuremath{\intercal}}

\newcommand{\N}{\mathbb{N}}
\newcommand{\Z}{\mathbb{Z}}

\newcommand{\R}{\mathbb R}

\newcommand{\ie}{\emph{i.e.\/}}

\newcommand{\etal}{\emph{et al.}}

\newtheorem{theorem}{Theorem}
\newtheorem{lemma}[theorem]{Lemma}

\newtheorem{proposition}[theorem]{Proposition}
\newtheorem{remark}[theorem]{Remark}

\title{Properties of quantum stochastic walks from the asymptotic scaling 
exponent}

\author{Krzysztof Domino \and Adam Glos \and Mateusz Ostaszewski \and {\L}ukasz 
Pawela$^*$ \and Przemys{\l}aw Sadowski}
\address{Institute of Theoretical and Applied Informatics, 
Polish Academy of Sciences, Ba{\l}tycka 5, 44-100 Gliwice, Poland}
\email{lpawela@iitis.pl}

\begin{document}
\maketitle

\begin{abstract}
This work focuses on the study of quantum stochastic walks, which are a
generalization of coherent, \ie{} unitary quantum walks. Our main goal is to
present a measure of a coherence of the walk. To this end, we utilize the
asymptotic scaling exponent of the second moment of the walk \ie{} of the mean
squared distance covered by a walk. As the quantum stochastic walk model
encompasses both classical random walks and quantum walks, we are interested
how the continuous change from one regime to the other influences the
asymptotic scaling exponent. Moreover this model allows for behavior which is
not found in any of the previously mentioned model -- the model with global
dissipation. We derive the probability distribution for the walker, and
determine the asymptotic scaling exponent analytically, showing that ballistic
regime of the walk is maintained even at large dissipation strength.
\end{abstract}

\section{Introduction} The examination of the interaction of the quantum system
with the environment is important in quantum computing, where the environmental
interaction may destroy coherence and disturb quantum computation
\cite{gonis2001decoherence, amin2009decoherence}. Stochastic walks governed by
the Gorini-Kossakowski-Sudarshan-Lindblad (GKSL)~\cite{kossakowski1972quantum,
lindblad1976generators, gorini1976completely} equation can be used to examine
the quantum system that interacts with an environment
\cite{whitfield2010quantum}. Quantum walks~\cite{venegas2012quantum} have been
widely applied in quantum information theory for studying quantum
transport~\cite{robens2015ideal}, interference~\cite{stefanak2010interference},
entanglement~\cite{vieira2013dynamically},
measurement~\cite{kurzynski2013quantum}, entropy
production~\cite{kollar2014entropy}, topological
phases~\cite{kitagawa2010exploring}, gauge theories~\cite{arnault2016quantum}
and relativistic quantum mechanics~\cite{chandrashekar2010relationship} to name
a few. There are also examples of application of the quantum stochastic walk
model outside the field of quantum information, such as the description of
production of radical pairs in biological structures~\cite{chia2016coherent}.

In order to measure the coherence of such walks, we use the asymptotic scaling
exponent $\alpha$ of the mean squared distance covered by the walk
\cite{metzler2014anomalous,spiechowicz2016transient,spiechowicz2015diffusion,klages2008anomalous}
 that fulfills, for large times $t$, the following scaling law of the second 
moment of frequency distribution of the walker position $\mu_2(t) \propto 
t^{\alpha}$. It is important to note here, that for the quantum walk on a 
closed system, the walk is ballistic in the sense that $\alpha = 2$ 
\cite{kempe2003quantum}. However if we introduce a local interaction with an 
environment, the normal diffusion regime $\alpha = 1$ is reconstructed for 
large $t$ \cite{caldeira1983path}. Remarkably, there are some quantum walks, 
where regardless of the environment interaction, the ballistic regime ($\alpha 
= 2$) is maintained \cite{prokof2006decoherence}.

%%%%%%%%%%%%%%%%%%%%%%%%%%%%%%%%%%%%%%%%%%%%%%%%%%%%%%%%%%%%%%%%%%%%%%%%%%%%%%%%
\subsection{The model}
%%%%%%%%%%%%%%%%%%%%%%%%%%%%%%%%%%%%%%%%%%%%%%%%%%%%%%%%%%%%%%%%%%%%%%%%%%%%%%%%

The quantum stochastic walk model was introduced by Whitfield
\etal~\cite{whitfield2010quantum}. It is based on the GKSL master equation
\begin{equation}
\dot{\rho} = -\ii [H, \rho] + \sum_k \left( \gamma_k L_k \rho L_k^\dagger - 
\frac12 
\gamma_k
\{L_k ^\dagger L_k, \rho\} \right),
\end{equation}
where we have set $\hbar=1$ and $\{A, B\}$ denotes the anticommutator. In our
study we set all $\gamma_k =1$. Furthermore, we introduce an additional
parameter $\omega \in [0,1]$, which allows for a smooth transition from purely
coherent to purely dissipative evolution
\begin{equation}
\dot{\rho} = -\ii (1-\omega) [H, \rho] + \omega\sum_k \left( L_k \rho 
L_k^\dagger - 
\frac12 \{L_k ^\dagger L_k, \rho\} \right) \label{eq:stochastic}.
\end{equation}
As usual, we will take the Hamiltonian $H$ to be the adjacency matrix $A$ of
the underlying graph. Thus, in the case of a walk on a line, this matrix has a
very simple tridiagonal structure: $\bra{i}A\ket{i+1} = \bra{i+1}A\ket{i} = 1$
and $\bra{i}A\ket{j} = 0$ otherwise. After integrating
Eq.~\eqref{eq:stochastic} we obtain
\begin{equation}
\kett{\rho(t)} = S_\omega^t\kett{\rho(0)},\label{eq:evolution}
\end{equation}
where $\kett{\rho}$ denotes the vectorization of $\rho$ and is defined as a
linear operator for which
$\kett{\ketbra{i}{j}}=\ket{i}\ket{j}$. The operator $S_\omega^t$ is
\begin{equation}
S_\omega^t = \exp\left[
-\ii(1-\omega) t (H\otimes\1-\1\otimes \overline{H})
+
 \omega t \sum_k L_k \otimes \overline{L}_k 
-\frac{1}{2}L_k^\dagger L_k\otimes\1-\frac{1}{2}\1\otimes 
L_k^\transpose \overline{L}_k 
\right].\label{eq:integrated}
\end{equation}
Sometimes, for simplicity, we will write $\rho(t) = S_\omega^t(\rho(0))$ 
instead of Eq.~\eqref{eq:evolution}.

The dissipative part of Eq.~\eqref{eq:stochastic} allows us to recover a 
classical behavior of the walk. To this end, we choose the following set of 
operators

\begin{equation}
\LL = \left\{L | L = \frac{1}{\deg(m)} \ketbra{n}{m} , (m,n)\in E
\right\},\label{eq:classical}
\end{equation}
where $E$ is the collection of edges of the graph and $\deg(x)$
denotes the degree of a vertex. In the case of a walk on a line this reduces to
\begin{equation}
\LL = \left\{L | L = \frac12 \ketbra{i}{i+1} \vee L = \frac12 
\ketbra{i+1}{i}, i\in \Z \right\}.
\end{equation}

In this case the Lindblad operators model a local continuous quantum
measurement. Invoking continuous measurement
theory~\cite{jacobs2006straightforward}, we may interpret these non-hermitian
Lindblad terms as energy damping.

Now let us move to the global environment interaction case. This case is
interesting, since some coherence may be carried via an environment interaction,
sustaining the quantum walk regime even for relatively large environmental
interaction strength. To study this case, we use a single dissipation operator
\begin{equation}
L_S = \sum_{L\in \LL} L.\label{eq:sum}
\end{equation}
Obviously we have $L_S=L^\dagger_S$. This operator allows us to achieve dynamics
which cannot be seen neither in the classical, nor in the quantum limit. In the
case of a walk on a line segment of length $n$, the operator $L_S$ has a
slightly different form
\begin{equation}
\bra{i}L_{S}\ket{i+1} = \bra{i+1}L_{S}\ket{i} = \frac12 \;\; \mathrm{and} \;\; 
\bra{i}L_{S}\ket{j}=0 
\; \;
\mathrm{otherwise}.\label{eq:lin-line-segment}
\end{equation}
Hence, in this case, the operator $L_S$ is given by a subset of $n$ rows and 
the corresponding $n$ columns of the operator on a line. This operator 
represents a continuous position measurement.

%%%%%%%%%%%%%%%%%%%%%%%%%%%%%%%%%%%%%%%%%%%%%%%%%%%%%%%%%%%%%%%%%%%%%%%%%%%%%%%%
\section{Probability distribution of stochastic quantum walks}
%%%%%%%%%%%%%%%%%%%%%%%%%%%%%%%%%%%%%%%%%%%%%%%%%%%%%%%%%%%%%%%%%%%%%%%%%%%%%%%%
Here we will derive the probability distributions for stochastic quantum walks 
on a line segment and on an infinite line.
%%%%%%%%%%%%%%%%%%%%%%%%%%%%%%%%%%%%%%%%%%%%%%%%%%%%%%%%%%%%%%%%%%%%%%%%%%%%%%%%
\subsection{Stochastic quantum walk on a line segment}
%%%%%%%%%%%%%%%%%%%%%%%%%%%%%%%%%%%%%%%%%%%%%%%%%%%%%%%%%%%%%%%%%%%%%%%%%%%%%%%%
In this section we derive the probability distribution of the quantum
stochastic walk on a line segment of length $n$ for the case with a single
dissipation operator, $L_S$. We have the following Theorem

\begin{theorem}\label{thr:probOnSegment}
Given a stochastic walk on a line segment of length $n$ with the dissipation 
operator defined in Eq.~\eqref{eq:lin-line-segment} with some initial state 
$\rho(0)=\proj{l}$ where $l\in\{1,\dots,n\}$, the diagonal part of $\rho(t)$ is 
given by
\begin{equation}
\begin{split}
\bra{k}\rho(t)\ket{k}&=\left(\frac{2}{n+1}\right)^2 \sum\limits_{i,j=1}^n 
\sin\left(\frac{ki\pi}{n+1}\right)
\sin\left(\frac{kj\pi}{n+1}\right)\sin\left(\frac{li\pi}{n+1}\right)
\sin\left(\frac{lj\pi}{n+1}\right)\times\\
&\times 
\exp\left[-\frac 
t2\omega(\lambda_i - \lambda_j)^2\right]
\exp\left[ 2\ii t (1-\omega) (\lambda_i-\lambda_j)) 
\right],
\end{split}
\end{equation}
where
\begin{equation}
\lambda_i = \cos\left( \frac{i\pi}{n+1} \right).
\end{equation}
\end{theorem}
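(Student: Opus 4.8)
The plan is to diagonalize the generator of the evolution explicitly. The key observation is that both the Hamiltonian $H$ (the adjacency matrix of the line segment) and the single dissipation operator $L_S$ are, up to the factor $\frac12$ for $L_S$, the \emph{same} tridiagonal matrix. Since this matrix is real symmetric with the well-known path-graph spectrum, it is diagonalized by the sine basis: writing $\ket{v_i}=\sqrt{\tfrac{2}{n+1}}\sum_{k=1}^n \sin\!\left(\tfrac{ki\pi}{n+1}\right)\ket{k}$, we have $H\ket{v_i}=2\lambda_i\ket{v_i}$ with $\lambda_i=\cos\!\left(\tfrac{i\pi}{n+1}\right)$, and correspondingly $L_S\ket{v_i}=\lambda_i\ket{v_i}$. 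Because $H$ and $L_S$ are simultaneously diagonalized and (after rescaling) proportional, the whole Lindblad generator in Eq.~\eqref{eq:integrated} becomes diagonal in the product basis $\ket{v_i}\otimes\ket{v_j}$.

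The next step is to evaluate the action of the generator on the vectorized operator $\kett{v_i}\bra{v_j}$, which maps to $\ket{v_i}\otimes\overline{\ket{v_j}}=\ket{v_i}\otimes\ket{v_j}$ since the eigenvectors are real. The coherent term $-\ii(1-\omega)(H\otimes\1-\1\otimes\overline H)$ contributes the eigenvalue $-\ii(1-\omega)\,2(\lambda_i-\lambda_j)$, matching the phase factor $\exp[2\ii t(1-\omega)(\lambda_i-\lambda_j)]$ up to sign convention. For the single dissipator, the terms $L_S\otimes\overline{L_S}-\tfrac12 L_S^\dagger L_S\otimes\1-\tfrac12\1\otimes L_S^\transpose\overline{L_S}$ act on $\ket{v_i}\otimes\ket{v_j}$ with eigenvalue $\lambda_i\lambda_j-\tfrac12\lambda_i^2-\tfrac12\lambda_j^2=-\tfrac12(\lambda_i-\lambda_j)^2$, using $L_S=L_S^\dagger=L_S^\transpose$ and realness of the eigenvectors. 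Multiplying by $\omega t$ gives exactly the decay factor $\exp[-\tfrac{t}{2}\omega(\lambda_i-\lambda_j)^2]$.

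From here the derivation is assembly. I would expand the initial state $\proj{l}$ in the eigenbasis, $\proj{l}=\sum_{i,j}\braket{v_i}{l}\braket{l}{v_j}\,\ketbra{v_i}{v_j}$, apply the diagonal exponential $S_\omega^t$ factor-by-factor to each $\kett{v_i}\bra{v_j}$ using the two eigenvalues computed above, and then re-expand back into the position basis by taking the $\bra{k}\cdot\ket{k}$ matrix element. Collecting the four sine factors $\braket{k}{v_i}$, $\braket{v_j}{k}$, $\braket{v_i}{l}$, $\braket{v_j}{l}$, together with the four normalization factors $\sqrt{\tfrac{2}{n+1}}$ that combine into $\left(\tfrac{2}{n+1}\right)^2$, reproduces the stated formula term by term.

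The main obstacle, and the step deserving the most care, is verifying that $H$ and $L_S$ are genuinely simultaneously diagonalizable so that the generator is exactly diagonal rather than merely block-diagonal. This hinges on the precise structural fact from Eq.~\eqref{eq:lin-line-segment} that $L_S$ is proportional to $H$ on the segment (both being the same tridiagonal matrix), which is special to this single-dissipator global model; it would fail for the classical multi-operator choice in Eq.~\eqref{eq:classical}. Once that proportionality is secured, the combinatorics of the eigenvalue $-\tfrac12(\lambda_i-\lambda_j)^2$ for the dissipative part — essentially recognizing it as a perfect square $-\tfrac12(\lambda_i-\lambda_j)^2$ arising from the Lindblad ``cross minus two halves'' structure — is the only other place where a sign or factor-of-two slip could creep in, so I would double-check it against the known fact that a Hermitian $L$ commuting with $\rho$ produces pure dephasing with rate governed by squared eigenvalue gaps.
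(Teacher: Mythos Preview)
Your proposal is correct and follows essentially the same route as the paper: both arguments hinge on the structural identity $H=2L_S$ on the segment, simultaneously diagonalize the generator in the sine basis $\ket{v_i}\otimes\ket{v_j}$, read off the eigenvalue $-\tfrac12\omega(\lambda_i-\lambda_j)^2 - 2\ii(1-\omega)(\lambda_i-\lambda_j)$, and then sandwich between $\bra{k}\cdot\ket{k}$ and $\proj{l}$ to extract the four sine factors. The sign ambiguity you flag in the coherent phase is harmless, since the double sum is symmetric under $i\leftrightarrow j$.
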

\begin{proof}
Using Eq.~\eqref{eq:integrated} we have
\begin{equation}
\label{eq:integrate_lind}
S_\omega^t = \exp \left[t\omega \left( L_{S}\otimes L_{S} -\frac{1}{2} 
L_{S}^2\otimes\Id -\frac{1}{2}\Id\otimes L_{S}^2\right) - \ii t 
(1-\omega)\left( H \otimes \1 - \1 \otimes H \right)\right].
\end{equation}
Now we note that in the case of the walk on a line segment, we have $2L_S=H$
and $[L_S\otimes L_S, L_S^m \otimes \1] = 0$. Hence, the eigenvectors of
$S^t_{\omega}$ are the same as the eigenvectors of $L_S \otimes L_S$. It is
straightforward to check that
\begin{equation}
S_\omega^t=\sum_{i,j}\exp(-\omega\frac{t}{2}(\lambda_i-\lambda_j)^2) \exp(2 \ii 
t (1-\omega)(\lambda_i - \lambda_j)) 
\ketbra{\lambda_i,\lambda_j}{\lambda_i,\lambda_j},
\end{equation}
where $\lambda_i$  and $\ket{\lambda_i}$ denote the eigenvalues and
eigenvectors of $L_S$. As $L_S$ is a tridiagonal Toeplitz matrix its
eigenvalues are given by~\cite{pasquini2006tridiagonal}
\begin{equation}
\lambda_j=\cos\Big(\frac{j\pi}{n+1}\Big),
\end{equation}
where $1\leq j \leq n$. Furthermore the elements of the eigenvectors matrix are
\begin{equation}
\braket j {\lambda_i}= \sqrt{\frac{2}{n+1}}\sin\left( 
\frac{ij\pi}{n+1} \right) = \braket{i}{\lambda_j}.
\end{equation}
From this we get that the elements of $S_\omega^t$ in the computational basis 
are
\begin{equation}
\begin{split}
\bra{\gamma,\delta}S^t_{\omega}\ket{\kappa, \beta}&=\sum_{i,j=1}^n 
\braket{i}{\lambda_{\kappa}}\braket{j}{\lambda_\beta}  
\braket{i}{\lambda_\gamma} \braket{j}{\lambda_\delta} 
\exp(-\omega\frac{t}{2}(\lambda_i-\lambda_j)^2) \exp(2\ii t 
(1-\omega)(\lambda_i - \lambda_j)) =\\
&=\left(\frac{2}{n+1}\right)^2 \sum_{i,j=1}^n 
\sin\left( \frac{\kappa i\pi}{n+1} \right)
\sin\left( \frac{\beta j\pi}{n+1} \right)
\sin\left( \frac{\gamma i\pi}{n+1} \right)
\sin\left( \frac{\delta j\pi}{n+1} \right) \times \\
&\phantom{=\ }\times \exp(-\omega\frac{t}{2}(\lambda_i-\lambda_j)^2) \exp(2\ii 
t 
(1-\omega)(\lambda_i - \lambda_j)).
\end{split}
\end{equation}
Putting $\kappa=\beta=k$ and $\gamma=\delta=l$ we recover the desired result.
\end{proof}

Next, we will study the behavior of the walk for large time $t$. We have the 
following result.
\begin{proposition}\label{th:asymptotic-prob}
Suppose we have a stochastic walk on a line segment of length $n$ described by
Eq.~(\ref{eq:stochastic}). If  $\omega\in(0,1]$  and we start in some
initial state $\rho(0) = \ketbra{l}{l}$ for $l\in\{1,\dots,n\}$, than for $t 
\to \infty$, $\rho(t)$
converges to some $\rho_\infty$ such that:
\begin{enumerate}
\item If $n$ is odd and $l=\frac{n+1}{2}$ we have
\begin{equation}
\bra{k}\rho\ket{k} = \begin{cases}
\frac{2}{n+1} ,& k=l,\\
\frac{1}{n+1}, & k\neq l.
\end{cases}
\end{equation}

\item Otherwise, when $l \neq \frac{n+1}{2}$ we have
\begin{equation}
\bra{k}\rho\ket{k} = \begin{cases}
\frac{3}{2(n+1)} ,& k=l \textrm{ or }k=n+1-l,\\
\frac{1}{n+1}, & \textrm{otherwise.}
\end{cases} 
\end{equation}
\end{enumerate}
\end{proposition}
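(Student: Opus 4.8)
The plan is to start from the closed-form expression for $\bra{k}\rho(t)\ket{k}$ supplied by Theorem~\ref{thr:probOnSegment} and identify which summands survive as $t\to\infty$. All $t$-dependence resides in the two exponential factors, and because $\omega\in(0,1]$ the Gaussian factor $\exp[-\tfrac{t}{2}\omega(\lambda_i-\lambda_j)^2]$ decays to zero unless $\lambda_i=\lambda_j$. So first I would establish that the map $i\mapsto\lambda_i=\cos(i\pi/(n+1))$ is injective on $\{1,\dots,n\}$: each argument $i\pi/(n+1)$ lies in $(0,\pi)$, where cosine is strictly decreasing, hence $\lambda_i=\lambda_j$ forces $i=j$. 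Only the diagonal terms $i=j$ therefore persist, and for these both exponentials equal $1$, leaving
\begin{equation}
\bra{k}\rho_\infty\ket{k}=\left(\frac{2}{n+1}\right)^2\sum_{i=1}^n \sin^2\left(\frac{ki\pi}{n+1}\right)\sin^2\left(\frac{li\pi}{n+1}\right).
\end{equation}

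The remaining task is to evaluate this finite trigonometric sum. I would apply $\sin^2\theta=\tfrac12(1-\cos 2\theta)$ to both factors and expand, reducing the sum to combinations of $C_m:=\sum_{i=1}^n\cos(2mi\pi/(n+1))$ for $m\in\{k,\,l,\,k-l,\,k+l\}$. The computational core is the geometric-sum identity
\begin{equation}
C_m=\sum_{i=1}^n\cos\left(\frac{2mi\pi}{n+1}\right)=\begin{cases} n,& (n+1)\mid m,\\ -1,& \text{otherwise}, \end{cases}
\end{equation}
obtained by summing the associated complex exponential $e^{2\ii mi\pi/(n+1)}$. Here $C_k=C_l=-1$ always (since $1\le k,l\le n$), while $C_{k-l}=n$ exactly when $k=l$ and $C_{k+l}=n$ exactly when $k+l=n+1$, this being the only multiple of $n+1$ attainable by $k+l\in\{2,\dots,2n\}$.

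Assembling these pieces, the bracketed quantity evaluates to $n+2+\tfrac12(C_{k-l}+C_{k+l})$, so that after multiplying by $4/(n+1)^2$ the four possible combinations of the two conditions yield $2/(n+1)$ (both hold), $3/(2(n+1))$ (exactly one holds), and $1/(n+1)$ (neither holds). The last step is to translate these conditions into the dichotomy of the statement: both conditions hold simultaneously precisely when $n$ is odd and $k=l=(n+1)/2$, giving part~(1); and since $k+l=n+1$ is equivalent to $k=n+1-l$, the two single-condition cases collapse into $k=l$ or $k=n+1-l$, giving part~(2). I expect the main obstacle to be purely bookkeeping—checking that $k=n+1-l$ always lies in $\{1,\dots,n\}$ and is distinct from $l$ when $l\neq(n+1)/2$, so that the branches match exactly—rather than any analytic difficulty.
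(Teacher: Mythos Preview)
Your proposal is correct and follows essentially the same route as the paper: pass to the $t\to\infty$ limit in Theorem~\ref{thr:probOnSegment}, keep only the $i=j$ terms (using injectivity of $i\mapsto\lambda_i$), and evaluate the resulting sum $\sum_i\sin^2(ki\pi/(n+1))\sin^2(li\pi/(n+1))$ via product-to-sum identities and closed forms for cosine sums---the paper splits into cases and invokes separate $\sin^4$ and $\sum\cos(hx)$ lemmas, whereas you treat all cases uniformly through the single geometric-series evaluation of $C_m$. One minor arithmetic slip: your ``bracketed quantity'' $n+2+\tfrac12(C_{k-l}+C_{k+l})$ already equals four times the sine sum, so it should be multiplied by $1/(n+1)^2$, not $4/(n+1)^2$, to reach the (correctly stated) final values.
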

The proof of Proposition~\ref{th:asymptotic-prob} is presented in 
Appendix~\ref{app:proof-th-asymptotic}.

\subsection{Quantum stochastic walk on a line}

In this section we derive the probability distribution for a quantum stochastic 
walk on a line as a function of the time $t$. We have the following theorem

\begin{theorem}\label{th:stochastic-prob}
For a stochastic walk on a line with an initial state $\rho(0) = \proj{0}$, the 
diagonal part of $\rho(t)$ is given by
\begin{equation}
\begin{split}
\bra{k}\rho(t)\ket{k}&=\frac{1}{4\pi^2}\int_{-\pi}^\pi\int_{-\pi}^\pi 
\cos(kx)\cos(ky)\exp\left[-\omega\frac{t}{2}(\cos(x)-\cos(y))^2\right] \\ 
&\phantom{=\ } \times \exp\left[2 \ii t (1-\omega)(\cos(x)-\cos(y))\right] \dd 
x\dd y.
\end{split}
\end{equation}
\end{theorem}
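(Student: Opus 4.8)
The plan is to mimic the proof of Theorem~\ref{thr:probOnSegment}, replacing the discrete eigendecomposition of the tridiagonal Toeplitz matrix with its continuum analogue. On the infinite line, the key structural fact is unchanged: we still have $2L_S=H$ and $[L_S\otimes L_S, L_S^m\otimes\1]=0$, so $S_\omega^t$ is diagonalized by the common eigenvectors of $L_S\otimes L_S$. The difference is that the adjacency operator of $\Z$ is a bi-infinite tridiagonal Toeplitz matrix whose spectral decomposition is given by the Fourier transform rather than by the finite sine basis. Concretely, I would introduce the (generalized) plane-wave eigenvectors
\begin{equation}
\braket{j}{x}=\frac{1}{\sqrt{2\pi}}\ee^{\ii j x},\qquad x\in[-\pi,\pi],
\end{equation}
which satisfy $L_S\ket{x}=\cos(x)\ket{x}$, i.e.\ the eigenvalue function is $\lambda(x)=\cos(x)$, the continuum version of $\lambda_j=\cos(j\pi/(n+1))$.

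Next I would write the spectral representation of the evolution operator by direct analogy to the segment case. Since $S_\omega^t$ acts on $\kett{\rho}$ and is diagonal in the product eigenbasis $\ket{x}\otimes\ket{y}$, its integral kernel is
\begin{equation}
\begin{split}
\bra{\gamma,\delta}S_\omega^t\ket{\kappa,\beta}
&=\frac{1}{(2\pi)^2}\int_{-\pi}^\pi\!\!\int_{-\pi}^\pi
\ee^{-\ii\kappa x}\ee^{-\ii\beta y}\ee^{\ii\gamma x}\ee^{\ii\delta y}\\
&\phantom{=\ }\times
\exp\!\left[-\omega\frac{t}{2}(\cos x-\cos y)^2\right]
\exp\!\left[2\ii t(1-\omega)(\cos x-\cos y)\right]\dd x\,\dd y,
\end{split}
\end{equation}
where the four inner products replace the four sine factors of the finite case. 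The eigenvalue of $S_\omega^t$ on $\ket{x}\otimes\ket{y}$ is obtained exactly as before by substituting $\lambda_i\to\cos x$ and $\lambda_j\to\cos y$ into the scalar exponential factors, which carries over verbatim because those factors depend on the eigenvalues only.

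To finish, I would apply this kernel to the vectorized initial state $\kett{\proj{0}}$ and read off the diagonal entries. Setting $\kappa=\beta=0$ (selecting the matrix element $\bra{0}\rho(0)\ket{0}$) and $\gamma=\delta=k$ makes the phase factors $\ee^{-\ii\kappa x}=\ee^{-\ii\beta y}=1$ and leaves $\ee^{\ii k x}\ee^{\ii k y}$. Symmetrizing the integrand over $x\to-x$ and $y\to-y$ — which is legitimate because the remaining Gaussian and phase factors depend only on $\cos x$ and $\cos y$ and are therefore even in each variable — converts $\ee^{\ii kx}\ee^{\ii ky}$ into $\cos(kx)\cos(ky)$ and yields the stated formula. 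I expect the main obstacle to be justifying the passage to the continuum rigorously: the plane waves $\ket{x}$ are not normalizable, so one should either regard the Fourier representation as the norm-limit of the finite-segment decomposition as $n\to\infty$ (with $i\pi/(n+1)\to x$ and the prefactor $(2/(n+1))^2\,\sum_{i,j}\to(2\pi)^{-2}\!\int\!\int$), or verify directly that $\int\braket{j}{x}\braket{x}{j'}\dd x=\delta_{jj'}$ so that the family $\{\ket{x}\}$ resolves the identity on $\ell^2(\Z)$. Once the completeness relation and the interchange of the (bounded, since $\omega\ge 0$) exponential with the integral are in hand, the computation is otherwise identical to the line-segment case.
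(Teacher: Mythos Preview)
Your argument is correct, but it is genuinely different from the paper's. The paper does not diagonalize $L_S$ directly on $\ell^2(\Z)$; instead it starts from the finite-segment formula of Theorem~\ref{thr:probOnSegment}, recenters the segment at the origin, uses trigonometric identities to rewrite the sine products, observes that only odd indices $i,j$ survive, and then recognizes the remaining double sum as a Riemann sum over $[-1,1]^2$ with mesh $1/(n+1)$. Taking $n\to\infty$ and substituting $u=(x+1)\pi/2$, $v=(y+1)\pi/2$ produces the stated integral. Your approach bypasses all of this by invoking the continuous spectral resolution of the bi-infinite Toeplitz operator from the start, which is more conceptual and avoids the somewhat delicate index bookkeeping and the parity/Riemann-sum identification. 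The price is exactly the point you flag: the plane waves $\ket{x}$ are generalized eigenvectors, so one must appeal to the Fourier completeness relation $\int_{-\pi}^{\pi}\braket{j}{x}\braket{x}{j'}\,\dd x=\delta_{jj'}$ (or equivalently to the spectral theorem for the bounded self-adjoint operator $L_S$) rather than to a finite eigenbasis. The paper's limit argument, by contrast, is entirely finite-dimensional until the last step and needs only that the integrand is Riemann integrable; in effect, the paper's proof is one concrete realization of the ``norm-limit of the finite-segment decomposition'' justification you mention as an alternative.
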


\begin{proof}
In the case of a walk on a line segment $[-n, \ldots, n]$, after using simple
trigonometric identities and assuming $n=4m+3$, $m \in \Z$, from
Theorem~\ref{thr:probOnSegment} we get
\begin{equation}
\begin{split}
\bra{k}\rho(t)\ket{k}=&\frac{1}{(n+1)^2} 
\sum\limits_{i,j=-n}^n
\sin\left (\frac{ki\pi}{2(n+1)}+\frac{k\pi}{2}+\frac{i\pi}{2}\right )
\sin\left (\frac{kj\pi}{2(n+1)}+\frac{k\pi}{2}+\frac{j\pi}{2}\right )
\times \\
&\times\sin\frac{i\pi}{2}	\sin\frac{j\pi}{2}
\exp\left[-\omega\frac{t}{2}(\sin\frac{\pi i}{2(n+1)}-\sin\frac{\pi 
j}{2(n+1)})^2\right]\times\\
&\times\exp\left[2\ii (1-\omega) t(\sin\frac{\pi i}{2(n+1)}-\sin\frac{\pi 
j}{2(n+1)})\right].
\end{split}
\end{equation}
Note, that for even $i$ or even $j$, the elements under the sum are equal to 
zero. We get
\begin{equation}
\begin{split}
\bra{k}\rho(t)\ket{k}&=\frac{1}{(n+1)^2} 
\sum\limits_{i,j \;\; \mathrm{odd}}
\cos\left (\frac{ki\pi}{2(n+1)}+\frac{k\pi}{2}\right )
\cos\left (\frac{kj\pi}{2(n+1)}+\frac{k\pi}{2}\right )
\times \\
&\phantom{\ =}\times \exp\left[-\omega\frac{t}{2}\left(\sin\frac{\pi 
i}{2(n+1)}-\sin\frac{\pi j}{2(n+1)}\right)^2\right]\times \\
		&\phantom{\ =}\times \exp\left[2\ii (1-\omega) t\left(\sin\frac{\pi 
		i}{2(n+1)}-\sin\frac{\pi 
				j}{2(n+1)}\right)\right].
\end{split}
\end{equation}
Note, that the formula above is $1/4$ of the Riemann sum of the function
\begin{equation}
\begin{split}
f(x)&=\cos\left (\frac{k\pi x}{2}+\frac{k\pi}{2}\right )\cos\left (\frac{k\pi 
	y}{2}+\frac{k\pi}{2}\right )\exp\left[-\omega\frac{t}{2}(\sin\frac{\pi 
		x}{2}-\sin\frac{\pi 
		y}{2})^2\right]\times \\
		&\times \exp\left[2\ii(1-\omega)t(\sin\frac{\pi 
				x}{2}-\sin\frac{\pi 
				y}{2})\right]
\end{split}
\end{equation}
over the square $[-1,1]\times [-1,1]$ when we divide the region into equal
squares. Hence, taking the limit $n \to \infty$ we get
\begin{equation}
\begin{split}
\bra{k}\rho(t)\ket{k} = & \frac{1}{4}\int_{-1}^1\int_{-1}^1 	\cos\left 
(\frac{k\pi 
	x}{2}+\frac{k\pi}{2}\right 
)\cos\left (\frac{k\pi 
	y}{2}+\frac{k\pi}{2}\right )\exp\left[-\omega\frac{t}{2}(\sin\frac{\pi 
		x}{2}-\sin\frac{\pi 
		y}{2})^2\right] \times \\
&		\times\exp\left[2\ii (1-\omega)t(\sin\frac{\pi 
				x}{2}-\sin\frac{\pi 
				y}{2})\right]\dd x \dd y.
\end{split}
\end{equation}
After substituting $u=\frac{(x+1)\pi}{2}$ and $v=\frac{(y+1)\pi}{2}$ we have
\begin{equation}
\begin{split}
\bra{k}\rho(t)\ket{k}=&\frac{1}{\pi^2}\int_{0}^\pi\int_{0}^\pi \cos(ku)
\cos(kv) \exp\left[-\omega\frac{t}{2}(\cos u-\cos v)^2\right]\times\\ & \times
\exp\left[2\ii (1-\omega)t(\cos u-\cos v)\right] \dd u \dd v.
\end{split}
\end{equation}
By symmetry with respect to $x=0$ and $y=0$ we obtain the result.
\end{proof}

%%%%%%%%%%%%%%%%%%%%%%%%%%%%%%%%%%%%%%%%%%%%%%%%%%%%%%%%%%%%%%%%%%%%%%%%%%%%%%%
\section{Purity}
%%%%%%%%%%%%%%%%%%%%%%%%%%%%%%%%%%%%%%%%%%%%%%%%%%%%%%%%%%%%%%%%%%%%%%%%%%%%%%%
Now we consider the purity of the state in a quantum stochastic walk on a line
segment as a function of the parameter $\omega$. For the set of dissipation
operators given by Eq.~\eqref{eq:classical} the behavior is not monotonic in
$\omega$. On the other hand, for the single dissipation operator given by
Eq.~\eqref{eq:lin-line-segment}, the purity is a non-increasing function of
$\omega$.

\begin{proposition}\label{th:purity}
Let us assume $0 < \omega_1 < \omega_2 < 1$. Then, given $S_{\omega_1}^t$,
$S_{\omega_2}^t$ defined as in Eq.~\eqref{eq:integrated} and assuming a single
dissipation operator $L_S$ given by Eq.~\eqref{eq:sum}, the following holds
\begin{equation}
\tr \left[ S_{\omega_1}^t(\rho)\right]^2 \geq \tr \left[ 
S_{\omega_2}^t(\rho)\right]^2
\end{equation}
for all $\rho$.
\end{proposition}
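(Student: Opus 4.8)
The plan is to use the explicit spectral decomposition of $S_\omega^t$ established in the proof of Theorem~\ref{thr:probOnSegment}. Since $2L_S = H$ on the line segment, the operators $L_S \otimes L_S$, $L_S^2 \otimes \1$ and $\1 \otimes L_S^2$ all commute, so $S_\omega^t$ is diagonal in the product eigenbasis $\{\ket{\lambda_i, \lambda_j}\}$ with eigenvalues
\begin{equation}
s_{ij}(\omega) = \exp\left[-\omega\tfrac{t}{2}(\lambda_i-\lambda_j)^2\right]\exp\left[2\ii t(1-\omega)(\lambda_i-\lambda_j)\right].
\end{equation}
The key observation is that the modulus of each eigenvalue is $|s_{ij}(\omega)| = \exp[-\omega\tfrac{t}{2}(\lambda_i-\lambda_j)^2]$, which depends on $\omega$ only through the decaying real exponent, while the phase factor $\exp[2\ii t(1-\omega)(\lambda_i-\lambda_j)]$ is a pure rotation that does not affect purity.

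First I would expand the purity in the eigenbasis. Writing $\kett{\rho} = \sum_{ij} c_{ij} \ket{\lambda_i,\lambda_j}$ with coefficients $c_{ij} = \braket{\lambda_i,\lambda_j}{\rho}$, the evolved state has coefficients $s_{ij}(\omega) c_{ij}$. Because the eigenbasis is orthonormal and $\rho$ is Hermitian, I expect the purity to take the form
\begin{equation}
\tr[S_\omega^t(\rho)]^2 = \sum_{i,j} |s_{ij}(\omega)|^2 \, |c_{ij}|^2 = \sum_{i,j} \exp\left[-\omega t (\lambda_i-\lambda_j)^2\right] |c_{ij}|^2,
\end{equation}
where I would need to verify carefully that the cross terms organize so that only the squared moduli $|s_{ij}|^2$ appear. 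The cleanest way to see this is to note $\tr[\sigma]^2 = \braket{\sigma}{\sigma}$ in the vectorized picture, so $\tr[S_\omega^t(\rho)]^2 = \sum_{ij} |s_{ij}(\omega)|^2 |c_{ij}|^2$ follows immediately from orthonormality, with the oscillatory phases cancelling in $|s_{ij}|^2$.

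Once the purity is in this diagonal form, the result is essentially monotonicity of each summand in $\omega$. For fixed $i,j$ the factor $\exp[-\omega t(\lambda_i-\lambda_j)^2]$ is non-increasing in $\omega$, since $t \geq 0$ and $(\lambda_i-\lambda_j)^2 \geq 0$. Hence $\omega_1 < \omega_2$ gives $\exp[-\omega_1 t(\lambda_i-\lambda_j)^2] \geq \exp[-\omega_2 t(\lambda_i-\lambda_j)^2]$ term by term, and since $|c_{ij}|^2 \geq 0$ the inequality survives summation, yielding $\tr[S_{\omega_1}^t(\rho)]^2 \geq \tr[S_{\omega_2}^t(\rho)]^2$ for every $\rho$.

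The main obstacle I anticipate is the second step: establishing rigorously that the purity reduces to $\sum_{ij}|s_{ij}|^2|c_{ij}|^2$ with no surviving interference between distinct $(i,j)$ and $(i',j')$. This hinges on the fact that $S_\omega^t$ is genuinely diagonal (not merely triangular) in an orthonormal basis, which is guaranteed here because $L_S$ is Hermitian and the three commuting operators are simultaneously diagonalizable with a common orthonormal eigenbasis. I would therefore make explicit that $\{\ket{\lambda_i,\lambda_j}\}$ is orthonormal and that $\braket{\rho}{\rho} = \tr \rho^\dagger \rho = \tr \rho^2$ for Hermitian $\rho$, so the Hilbert-Schmidt inner product in the vectorized picture coincides with the purity; the rest is the elementary term-by-term comparison above.
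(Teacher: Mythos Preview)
Your argument is correct, but it differs from the paper's. The paper does not use the explicit spectral decomposition; instead it writes $\omega_2=\omega_1+\varepsilon$, factorizes $S^t_{\omega_2}=\hat S^t_\varepsilon\, S^t_{\omega_1}$ using the commutation $[L_S,H]=0$, observes that $\hat S^t_\varepsilon$ is itself a unital quantum channel (a GKSL semigroup with Hermitian Lindblad operator), and then invokes Lemma~\ref{th:majorization_for_unital} to get $\lambda\bigl(S^t_{\omega_2}(\rho)\bigr)\prec\lambda\bigl(S^t_{\omega_1}(\rho)\bigr)$, from which the purity inequality follows by Schur-convexity of $\sum_i\lambda_i^2$. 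Your route is more elementary: by identifying $\tr\sigma^2=\langle\!\langle\sigma|\sigma\rangle\!\rangle$ for Hermitian $\sigma$ and using that $S_\omega^t$ is diagonal in the orthonormal basis $\{\ket{\lambda_i,\lambda_j}\}$, the purity becomes $\sum_{ij}|s_{ij}(\omega)|^2|c_{ij}|^2$ and the inequality is a term-by-term comparison of real exponentials, with no need for majorization theory. What the paper's approach buys is generality in the output: the majorization conclusion immediately yields monotonicity in $\omega$ of \emph{every} Schur-convex spectral functional (all R\'enyi entropies, $\tr\rho^p$ for $p\geq1$, etc.), not only the purity, and it makes clear that the mechanism is the unitality of the incremental channel rather than the particular form of the eigenvalues. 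Your approach, on the other hand, is self-contained and bypasses the external lemma entirely; it would be worth stating explicitly that $S_\omega^t(\rho)$ is Hermitian (as the output of a CPTP map on a state) so that $\tr[S_\omega^t(\rho)]^2=\|\kett{S_\omega^t(\rho)}\|^2$ is justified.
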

\noindent
In order to prove this proposition, we will need the following lemma
\begin{lemma}
\label{th:majorization_for_unital}
Suppose $\rho=\Phi(\sigma)$, where $\Phi$ is a unital channel, \ie\ 
$\Phi(\1)=\1$. Let $\lambda(\rho)$ denote the eigenvalues of $\rho$ in a 
decreasing order. Then $\lambda(\rho)\prec \lambda(\sigma)$, where $\prec$ 
denotes the majorization relation..
\end{lemma}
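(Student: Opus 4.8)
The plan is to reduce this quantum statement to the classical Hardy--Littlewood--P\'olya characterization of majorization: a vector $r$ is majorized by a vector $s$ precisely when $r = Ds$ for some doubly stochastic matrix $D$. So the whole proof amounts to extracting such a $D$ from the channel $\Phi$ and checking that it is doubly stochastic.

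First I would fix spectral decompositions $\rho = \sum_k r_k \proj{f_k}$ and $\sigma = \sum_l s_l \proj{e_l}$, with $r_k = \lambda_k(\rho)$ and $s_l = \lambda_l(\sigma)$ the eigenvalues in decreasing order and $\{\ket{f_k}\}$, $\{\ket{e_l}\}$ orthonormal eigenbases. The natural candidate is the transition matrix $D_{kl} = \bra{f_k}\Phi(\proj{e_l})\ket{f_k}$. Expanding $\sigma$ in its eigenbasis and using linearity of $\Phi$ gives $r_k = \bra{f_k}\Phi(\sigma)\ket{f_k} = \sum_l s_l D_{kl}$, that is $r = Ds$ as vectors.

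Next I would verify that $D$ is doubly stochastic, which is where each hypothesis on $\Phi$ enters in turn. Nonnegativity $D_{kl}\ge 0$ holds because $\Phi$ is positive, so every $\Phi(\proj{e_l})$ is a positive operator and its diagonal entries in any basis are nonnegative. The row sums are $\sum_l D_{kl} = \bra{f_k}\Phi(\sum_l \proj{e_l})\ket{f_k} = \bra{f_k}\Phi(\1)\ket{f_k} = \braket{f_k}{f_k} = 1$, where unitality $\Phi(\1)=\1$ is exactly what makes this work. The column sums are $\sum_k D_{kl} = \tr\Phi(\proj{e_l}) = \tr\proj{e_l} = 1$, using that a channel is trace preserving. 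With $D$ doubly stochastic and $r = Ds$, the Hardy--Littlewood--P\'olya theorem yields $r \prec s$, i.e.\ $\lambda(\rho)\prec\lambda(\sigma)$.

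The only real subtlety, and the step I would be careful about, is resisting the temptation to argue via a mixed-unitary decomposition $\Phi(\cdot)=\sum_i p_i U_i(\cdot)U_i^\dagger$: such a decomposition exists for qubit unital channels but fails for unital channels in general dimension, so basing the proof on it would be a gap. Routing everything through the doubly stochastic transfer matrix $D$ above avoids this issue entirely and uses only positivity, trace preservation, and unitality. I would also state explicitly that \emph{channel} is taken to mean a completely positive trace-preserving map, so that positivity and trace preservation are genuinely available as hypotheses.
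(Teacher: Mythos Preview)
Your argument is correct and is in fact the standard proof of this result (often attributed to Uhlmann). The transition matrix $D_{kl}=\bra{f_k}\Phi(\proj{e_l})\ket{f_k}$ is doubly stochastic precisely by positivity, unitality, and trace preservation, and $r=Ds$ then gives $\lambda(\rho)\prec\lambda(\sigma)$ via Hardy--Littlewood--P\'olya. Your caveat about avoiding a mixed-unitary decomposition is also well taken.

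As for comparison with the paper: there is nothing to compare. The paper states Lemma~\ref{th:majorization_for_unital} without proof, treating it as a known fact to be invoked in the proof of Proposition~\ref{th:purity}. Your write-up therefore supplies a proof where the paper gives none; it would be appropriate to accompany the lemma with a citation to a standard reference for this result, but the argument itself stands on its own.
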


\begin{proof}[Proof of Proposition~\ref{th:purity}]
We assume $0 < \omega_1 < \omega_2 < 1$, so $\omega_2 = \omega_1 + 
\varepsilon$. Again, using the fact that $L_S$ and $H$ commute we obtain
\begin{equation}
S^t_{\omega_2}=\hat{S}^t_\varepsilon S^t_{\omega_1},
\end{equation}
where
\begin{equation}
\hat{S}^t_\varepsilon=\exp\left[t\varepsilon(L_{S}\otimes L_{S} 
	-\frac{1}{2}L_{S}^2\otimes\Id
	-\frac{1}{2}\Id\otimes L_{S}^2) +t\varepsilon\ii (H \otimes\Id - 
	\Id\otimes H )\right].
\end{equation}
Now, noticing that $\hat{S}^t_\varepsilon$ is unital, the Proposition follows 
from 
Lemma~\ref{th:majorization_for_unital}.
\end{proof}
\begin{figure}[h!]
\subfloat[Stochastic case]{\includegraphics{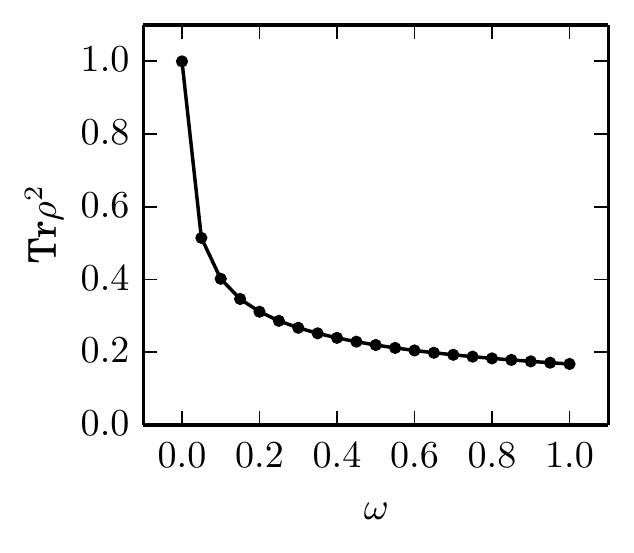}}
\subfloat[Classical case]{\includegraphics{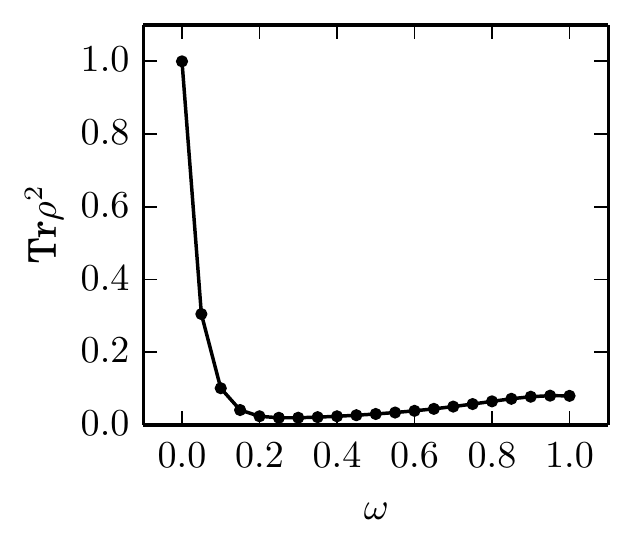}} 

\caption{Purity of final state for stochastic walk on a line segment of length 
400, with time 25 and the initial state is $\ketbra{200}{200}$.
}\label{f::piurity}
\end{figure}

Analyzing Fig.~\ref{f::piurity} one can see that the purity in the global 
dissipation case, decreases rapidly 
with $\omega$, and has sharp maximum at $\omega = 0$. The conclusion is that, 
if the ballistic regime, $\alpha=2$, is sustained for strong environment 
interaction, $\omega \rightarrow 1$, it is not caused by a high purity.

%%%%%%%%%%%%%%%%%%%%%%%%%%%%%%%%%%%%%%%%%%%%%%%%%%%%%%%%%%%%%%%%%%%%%%%%%%%%%%%%
\section{The asymptotic scaling exponent approach}
%%%%%%%%%%%%%%%%%%%%%%%%%%%%%%%%%%%%%%%%%%%%%%%%%%%%%%%%%%%%%%%%%%%%%%%%%%%%%%%%

In this section we investigate the asymptotic scaling of the second central
moment of the frequency distribution of the walker position
\cite{havlin1982fractal, haus1987diffusion, havlin1987diffusion}.

\begin{remark}
The asymptotic scaling exponent distinguishes between different types of 
diffusion process 
\cite{metzler2014anomalous,spiechowicz2016transient,spiechowicz2015diffusion,
	klages2008anomalous}:
\begin{itemize}
\item if $\alpha < 1$ the process is sub-diffusive,
\item if $\alpha = 1$ the process obeys a normal diffusion regime,
\item if $1 < \alpha < 2$ the process is super-diffusive,
\item if $\alpha = 2$ the process obeys a ballistic diffusion regime.
\end{itemize} 
\end{remark}

For the standard Markovian random walk on a uniform lattice the asymptotic 
scaling exponent
value is $\alpha = 1$ \cite{havlin1987diffusion}. For the quantum walk on
such a lattice it is $\alpha = 2$ \cite{grimmett2004weak}.

Consider now the general stochastic walk which is a combination of the quantum
walk, where $\alpha = 2$, and purely dissipative stochastic walk, where $\alpha 
= 1$ no matter which of the studied dissipation operators are chosen. We 
use the asymptotic scaling exponent to evaluate which regime, classical or 
quantum, 
dominates the process for $\omega \in (0, 1)$.

Let us discuss now the procedure used to calculate the asymptotic scaling
exponent. The second moment of the probability distribution of the stochastic
walk at time $t$ is given by
\begin{equation}
\mu_2(t) =\sum_{k}k^2 \bra{k}\rho(t)\ket{k}.
\end{equation}
To compute the asymptotic scaling exponent we use the standard scaling relation 
~\cite{metzler2014anomalous,spiechowicz2016transient,spiechowicz2015diffusion,
	klages2008anomalous,haus1987diffusion}
\begin{equation}\label{eq:h1}
\mu_2(t) \propto t^{\alpha},
\end{equation}
We have the following results regarding the moments of the stochastic walk. 
First, we calculate the $m$\textsuperscript{th} central moment for a stochastic 
walk on a line.
\begin{proposition}\label{th:stochastic-moments}
For a stochastic walk on a line with an initial state $\rho(0) = \proj{0}$,
$\omega =1$ and $L_S$ as in Eq~\eqref{eq:sum}, the $m$-th central moment 
$\mu_m(t)$ is polynomial in $t$ for $m$
even and zero otherwise. Moreover for even $m$ we have
\begin{equation}
\lim_{t\to\infty}\frac{\mu_m(t)}{t^\frac{m}{2}}= \frac{m!}{\left( \frac m2 
\right)!8^\frac{m}{2}}\binom{m}{\frac{m}{2}}.
\end{equation}

\end{proposition}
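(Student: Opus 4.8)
The plan is to specialize the integral formula of Theorem~\ref{th:stochastic-prob} to $\omega=1$ and reduce the computation of \emph{all} moments to reading off the Taylor coefficients at the origin of a single smooth, $2\pi$-periodic ``characteristic function'' built from the distribution. First I would set $\omega=1$, which kills the oscillatory exponential; writing $g(x,y)=\exp[-\tfrac t2(\cos x-\cos y)^2]$ we obtain
\[
p_k(t):=\bra{k}\rho(t)\ket{k}=\frac{1}{4\pi^2}\int_{-\pi}^{\pi}\!\int_{-\pi}^{\pi}\cos(kx)\cos(ky)\,g(x,y)\,\dd x\,\dd y.
\]
Since $\rho(0)=\proj{0}$ and $g$ is even in each argument, $p_k$ is symmetric under $k\mapsto-k$; the mean therefore vanishes and the central moments coincide with the raw moments $\mu_m(t)=\sum_k k^m p_k(t)$.

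The key device is to introduce
\[
R(\theta):=\frac{1}{2\pi}\int_{-\pi}^{\pi}\exp\!\Big[-\tfrac t2\big(\cos x-\cos(x+\theta)\big)^2\Big]\,\dd x,
\]
a manifestly $C^\infty$ and $2\pi$-periodic function of $\theta$. A change of variables $y=x+\theta$ together with the evenness of $g$ in $y$ shows that the $k$-th Fourier coefficient of $R$ is exactly $p_k$. Because $R$ is smooth, the $p_k$ decay faster than any power of $k$, so $\sum_k k^m p_k$ converges absolutely and may be evaluated by term-by-term differentiation, giving the compact identity $\mu_m(t)=(-\ii)^m R^{(m)}(0)$. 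This is the step that I expect to carry the whole argument: it replaces the moment problem by differentiating one analytic function at a point.

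Next I would evaluate $R$ explicitly. Using $\cos x-\cos(x+\theta)=2\sin(\theta/2)\sin(x+\theta/2)$ and a shift in $x$,
\[
R(\theta)=\frac{1}{2\pi}\int_{-\pi}^{\pi}\exp\!\big[-2t\sin^2(\theta/2)\sin^2 x\big]\,\dd x,
\]
and expanding the exponential with the Wallis integral $\tfrac1{2\pi}\int_{-\pi}^{\pi}\sin^{2p}x\,\dd x=4^{-p}\binom{2p}{p}$ yields the convergent series
\[
R(\theta)=\sum_{p\ge 0}\frac{(-2t)^p}{p!\,4^p}\binom{2p}{p}\sin^{2p}(\theta/2),
\]
which in closed form is $e^{-t\sin^2(\theta/2)}I_0\!\big(t\sin^2(\theta/2)\big)$. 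Since $R$ depends on $\theta$ only through $\sin^2(\theta/2)$ it is even, so $R^{(m)}(0)=0$ for odd $m$, proving $\mu_m\equiv 0$ in that case. For $m=2p$ each summand contributes $\sin^{2q}(\theta/2)$, whose Taylor expansion begins at $\theta^{2q}$; hence the coefficient of $\theta^{2p}$ is a finite sum over $q\le p$, and reading it off shows $\mu_{2p}(t)=(-1)^pR^{(2p)}(0)$ is a polynomial in $t$ of degree $p=m/2$.

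Finally, the leading $t$-power comes solely from the term $q=p$, through $\sin^{2p}(\theta/2)=(\theta/2)^{2p}+\cdots$; tracking only that contribution gives $\mu_{2p}(t)=\frac{(2p)!}{p!\,8^p}\binom{2p}{p}\,t^p+O(t^{p-1})$, and dividing by $t^{p}=t^{m/2}$ and letting $t\to\infty$ reproduces exactly the claimed limit $\frac{m!}{(m/2)!\,8^{m/2}}\binom{m}{m/2}$. The one genuinely delicate point throughout is that the naive object $\sum_k k^m\cos(kx)\cos(ky)$ does not converge pointwise — it is a derivative of a Dirac comb — so a direct interchange of $\sum_k$ with the double integral is illegitimate; packaging the data as the Fourier coefficients of the smooth function $R(\theta)$ is precisely what makes every subsequent manipulation rigorous, after which all that remains is elementary Taylor and Wallis bookkeeping.
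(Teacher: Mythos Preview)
Your argument is correct, and it follows a genuinely different route from the paper's. The paper expands $\exp[-\tfrac t2(\cos x-\cos y)^2]$ directly in powers of $t$, evaluates the resulting double integrals $\int\cos(kx)\cos^l(x)\,\dd x$ via a recursion (Lemma~\ref{lem:cosIntegral}), collapses a Vandermonde-type binomial sum (Lemma~\ref{lem:sumDoubleBinomial}) to obtain the explicit Taylor series $\rho_{kk}(t)=\sum_{n\ge|k|}\tfrac{(-1)^{n+k}}{8^n}\binom{2n}{n}\binom{2n}{n+k}\tfrac{t^n}{n!}$ (Proposition~\ref{th:probabilityTaylorSeries}), and only then sums $k^m\rho_{kk}$ using the alternating-power identity of Lemma~\ref{lem:sumOfPowerBinomial}. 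You instead package all $p_k$ at once as the Fourier coefficients of a single smooth function $R(\theta)$, reduce the double integral to a one-variable Wallis integral via the identity $\cos x-\cos(x+\theta)=2\sin(\theta/2)\sin(x+\theta/2)$, and read off the moments as derivatives $\mu_m=(-\ii)^m R^{(m)}(0)$. Your approach is shorter and avoids all three combinatorial lemmas; it also yields the pleasant closed form $R(\theta)=e^{-t\sin^2(\theta/2)}I_0\!\big(t\sin^2(\theta/2)\big)$. What the paper's route buys in exchange is the explicit coefficient sequence $A_{n,k}$ for each individual $\rho_{kk}(t)$, which is of independent interest and is reused verbatim in the proof of Proposition~\ref{th:stochastic-mixed-moments} for general $\omega$. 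One small remark: when you justify $\hat R(k)=p_k$ you invoke ``the evenness of $g$ in $y$''; strictly speaking you need the separate evenness of $g$ in \emph{each} variable to kill the $\sin kx$ and $\sin ky$ contributions from $e^{ik(x-y)}$, but this is of course immediate from the form of $g$.
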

The proof of Proposition~\ref{th:stochastic-moments} is given in Appendix
\ref{app:stochastic-moments}. From this, we get that the second central moment 
has the form $\mu_2(t)=\frac{1}{2}t$.

Now, we move to the case of arbitrary non-zero $\omega$. We study the leading 
term in the $m$\textsuperscript{th} central moment.

\begin{proposition}\label{th:stochastic-mixed-moments}
For a stochastic walk on a line with an initial state $\rho(0) = \proj{0}$ and 
$\omega \in(0,1)$, the 
$m$-th central moment $\mu_m(t)$ is polynomial in $t$ for even $m$ and zero 
otherwise. Moreover for even $m$ we have
\begin{equation}
\lim_{t\to\infty}\frac{\mu_m(t)}{t^m} = \binom{m}{\frac{m}{2}}(\omega-1)^m.
\end{equation}
\end{proposition}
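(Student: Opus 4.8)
The plan is to pass from the position distribution to its characteristic function and read off the moments by differentiation. Since the initial state $\proj{0}$ and the kernel in Theorem~\ref{th:stochastic-prob} are even in $k$ (the integrand depends on $k$ only through $\cos(kx)\cos(ky)$), the distribution $\bra{k}\rho(t)\ket{k}$ is symmetric under $k\mapsto -k$. Hence the mean vanishes, every central moment coincides with the corresponding raw moment $\mu_m(t)=\sum_k k^m\bra{k}\rho(t)\ket{k}$, and all odd moments are identically zero. This settles the odd case, and from now on I take $m$ even.

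Write $E(x,y,t)=\exp[-\omega\tfrac t2(\cos x-\cos y)^2]\exp[2\ii t(1-\omega)(\cos x-\cos y)]$ for the kernel of Theorem~\ref{th:stochastic-prob} and introduce the characteristic function $\chi(\theta,t)=\sum_k \ee^{\ii k\theta}\bra{k}\rho(t)\ket{k}$, so that $\mu_m(t)=(-\ii\,\partial_\theta)^m\chi(\theta,t)\big|_{\theta=0}$. Because $E$ is even in $x$ and in $y$, the factor $\cos(kx)\cos(ky)$ may be replaced by $\ee^{\ii k(x+y)}$ inside the double integral without changing its value. Substituting this into $\chi$, interchanging the $k$-sum with the integral, and using the Poisson/Dirac-comb identity $\sum_k \ee^{\ii k\phi}=2\pi\sum_n\delta(\phi-2\pi n)$ with $\phi=\theta+x+y$, the sum collapses the $y$-integration to the single contribution $y=-\theta-x$ (only the $n=0$ comb point lies in the interior of the square for $\theta$ near $0$; the others sit at the corners $x,y\approx\pm\pi$ and will be lower order). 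This yields the compact formula
\begin{equation}
\chi(\theta,t)=\frac{1}{2\pi}\int_{-\pi}^{\pi}E(x,-\theta-x,t)\,\dd x,
\end{equation}
which correctly gives $\chi(0,t)=1$.

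Now I would differentiate under the integral. Writing $E(x,-\theta-x,t)=\ee^{h(\theta)}$ with $h(\theta)=-\tfrac{\omega t}{2}G(\theta)^2+2\ii t(1-\omega)G(\theta)$ and $G(\theta)=\cos x-\cos(\theta+x)$, one has $G(0)=0$ and $G'(0)=\sin x$. By Faà di Bruno, $\partial_\theta^m \ee^{h}\big|_{0}$ is a finite sum of products of the derivatives $h^{(j)}(0)$, each of which is linear in $t$; this already proves that $\mu_m(t)$ is polynomial in $t$. The highest power of $t$ comes from the all-singletons term $(h'(0))^m$, every other set partition having fewer than $m$ blocks and hence a lower power of $t$. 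Crucially, since $G(0)=0$ the Gaussian part drops out of $h'(0)$, leaving $h'(0)=2\ii t(1-\omega)\sin x$, so that
\begin{equation}
(-\ii\,\partial_\theta)^m E(x,-\theta-x,t)\big|_{0}=\big(2t(1-\omega)\sin x\big)^m+O(t^{m-1}).
\end{equation}

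Inserting this into $\chi$ and using the Wallis integral $\int_{-\pi}^{\pi}\sin^m x\,\dd x=2\pi\,2^{-m}\binom{m}{m/2}$ (valid for even $m$) gives
\begin{equation}
\lim_{t\to\infty}\frac{\mu_m(t)}{t^m}=\frac{2^m(1-\omega)^m}{2\pi}\int_{-\pi}^{\pi}\sin^m x\,\dd x=\binom{m}{m/2}(1-\omega)^m=\binom{m}{m/2}(\omega-1)^m,
\end{equation}
the last equality using that $m$ is even. The main obstacle is rigour rather than algebra: one must justify the interchange of the infinite $k$-sum with the double integral together with the attendant distributional manipulation, and verify that the discarded pieces — the subleading Faà di Bruno terms and the corner comb points near $x=\pm\pi$, where $\sin x\approx 0$ — are genuinely $O(t^{m-1})$. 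I would secure both by first performing the computation on the finite segment of Theorem~\ref{thr:probOnSegment}, where every sum is finite and no distributions appear, and only then passing to the limit $n\to\infty$; alternatively the interchange is legitimate because, for fixed $t$, $\bra{k}\rho(t)\ket{k}$ decays super-polynomially in $k$ outside the ballistic light cone, so all moments converge absolutely.
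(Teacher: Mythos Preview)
Your argument is correct and reaches the stated limit, but by a genuinely different route from the paper. The paper expands the kernel of Theorem~\ref{th:stochastic-prob} as a power series in $t$, writes the Taylor coefficients $B_{n,k}$ of $\rho_{kk}(t)$ as linear combinations of the $\omega=1$ coefficients $A_{n-l,k}$ from Proposition~\ref{th:probabilityTaylorSeries}, and then invokes the binomial identity of Lemma~\ref{lem:sumOfPowerBinomial} to show that $\sum_k k^m B_{n,k}=0$ for every $n>m$; this is what truncates the series to a polynomial of degree $m$ and isolates the leading coefficient by a direct calculation at $n=m$, $l=m/2$. You instead pass to the characteristic function, collapse the $k$-sum by Poisson summation, and extract both the polynomial bound and the leading term from Fa\`a di Bruno together with a Wallis integral. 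Two remarks: first, your ``corner'' worry dissolves once you observe that $E$ depends only on $\cos x,\cos y$ and is therefore $2\pi$-periodic in each variable, so the double integral lives on a torus and the comb contributes exactly one point per $x$ --- the formula $\chi(\theta,t)=\tfrac{1}{2\pi}\int E(x,-\theta-x,t)\,\dd x$ is exact, not an approximation up to lower order. Second, your approach is more conceptual and makes transparent \emph{why} the dissipative part is invisible at leading order (because $G(0)=0$ annihilates the quadratic piece of $h$ in $h'(0)$), whereas the paper's purely combinatorial computation needs no distributional justification and in principle delivers every coefficient $\beta_{n,\omega}$ of the polynomial, not just the top one.
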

\noindent
Now, we calculate the second central moment. We get the following result.
\begin{proposition}\label{th:stochastic-mixed-second-moment}
For a stochastic walk on a line with an initial state $\rho(0) = \proj{0}$ and 
$\omega\in(0,1]$ the 
second central moment is of the form
\begin{equation}\label{eq::2m}
\mu_2(t) = 2\left(\omega-1\right)^2t^2+\frac{\omega}{2}t.
\end{equation}
\end{proposition}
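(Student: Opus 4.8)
The plan is to obtain the exact second central moment $\mu_2(t)$ by combining the two preceding Propositions, which already pin down the two extreme regimes, rather than by evaluating the double integral from Theorem~\ref{th:stochastic-prob} directly. The statement $\mu_2(t) = 2(\omega-1)^2 t^2 + \frac{\omega}{2}t$ is a quadratic in $t$ whose $t^2$ coefficient is $2(\omega-1)^2$ and whose $t^1$ coefficient is $\omega/2$. Notice that the leading coefficient is exactly what Proposition~\ref{th:stochastic-mixed-moments} predicts for $m=2$, since $\binom{2}{1}(\omega-1)^2 = 2(\omega-1)^2$ is the limit of $\mu_2(t)/t^2$; and the subleading coefficient $\omega/2$ matches the pure-dissipation value $\mu_2(t)=\frac{1}{2}t$ of Proposition~\ref{th:stochastic-moments} at $\omega=1$. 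So the real content is to show that $\mu_2(t)$ is \emph{exactly} this quadratic, with no constant term and no higher powers of $t$, for every $\omega\in(0,1]$.

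First I would set up $\mu_2(t)=\sum_k k^2 \bra{k}\rho(t)\ket{k}$ using the integral representation from Theorem~\ref{th:stochastic-prob}. Interchanging the sum over $k$ with the double integral, the key object becomes the distributional identity $\sum_k k^2 \cos(kx)\cos(ky)$, understood as a tempered distribution / generalized function on $[-\pi,\pi]^2$. Writing $\cos(kx)\cos(ky)=\frac{1}{2}[\cos(k(x-y))+\cos(k(x+y))]$ and using $\sum_k \cos(k\theta)=2\pi\,\delta(\theta)$ together with $\sum_k k^2\cos(k\theta)=-2\pi\,\delta''(\theta)$ (the second derivative of the Dirac comb), the moment reduces to pairing $-\delta''$ against the smooth envelope
\begin{equation}
g(x,y)=\exp\left[-\omega\tfrac{t}{2}(\cos x-\cos y)^2\right]\exp\left[2\ii t(1-\omega)(\cos x-\cos y)\right],
\end{equation}
evaluated on the diagonals $x=\pm y$. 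Concretely, each $\delta''$ term forces a second derivative of $g$ to be evaluated where $\cos x - \cos y = 0$, which is exactly the locus where the exponents and their first derivatives vanish, so the algebra stays finite and clean.

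The computation then amounts to differentiating $g$ twice along the appropriate direction and evaluating at the diagonal. Along $x=y$ one has $\cos x-\cos y\equiv 0$ there, but the relevant second derivative picks up $(\partial_x(\cos x-\cos y))^2 = \sin^2 x$ from the quadratic term and a contribution $-\sin^2 x$-type term from the imaginary exponent squared. After carrying out the differentiation, integrating the resulting $\sin^2$-weighted expression over the diagonal (using $\frac{1}{2\pi}\int_{-\pi}^\pi \sin^2 x\,\dd x=\tfrac12$), the $t^2$ contribution from the linear exponent yields $2(1-\omega)^2 t^2$ and the $t^1$ contribution from the Gaussian damping exponent yields $\tfrac{\omega}{2}t$, while all cross terms and the putative constant term cancel.

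The main obstacle will be making the distributional manipulation rigorous: the sum $\sum_k k^2\cos(k\theta)$ does not converge pointwise, so the interchange of summation and integration must be justified, and one must be careful that the $\delta''$ pairing correctly accounts for boundary terms at $\theta=0$ versus $\theta=\pm 2\pi$ on the torus and for the two diagonals $x=\pm y$ contributing equally. An alternative, and perhaps safer, route would be to truncate to the line segment with $n=4m+3$ as in Theorem~\ref{th:stochastic-prob}, compute the finite sum $\sum_k k^2\bra{k}\rho(t)\ket{k}$ exactly using the discrete orthogonality of the $\sin(ki\pi/(n+1))$ factors (which collapses the $k$-sum to a Kronecker constraint on $i,j$), take the $n\to\infty$ limit, and then expand the surviving exponential in $t$. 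I expect either route to deliver the same closed form; the distributional computation is shorter, and verifying consistency with Propositions~\ref{th:stochastic-moments} and~\ref{th:stochastic-mixed-moments} at $\omega=1$ and in the leading-$t$ limit provides a built-in correctness check on the final coefficients.
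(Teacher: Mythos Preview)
Your approach is sound and the computation, once carried through, does yield the stated formula; but it is a genuinely different route from the paper's. The paper obtains Proposition~\ref{th:stochastic-mixed-second-moment} as a direct specialization of the machinery built for Proposition~\ref{th:stochastic-mixed-moments}: it Taylor-expands the envelope $g(x,y)$ in powers of $t$, evaluates each resulting integral $\int\cos(kx)\cos^l(x)\,\dd x$ via Lemma~\ref{lem:cosIntegral}, and then collapses the $k$-sum using the combinatorial identity of Lemma~\ref{lem:sumOfPowerBinomial}, which forces $\mu_m(t)$ to be a polynomial in $t$ of degree at most $m$. Setting $m=2$ in the resulting expression \eqref{eq:mixed-moment-general} and reading off the three coefficients $\beta_0,\beta_1,\beta_2$ gives the claimed quadratic.

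Your plan instead performs the $k$-sum first, trading it for the distributional identity $\sum_k k^2\cos(k\theta)=-2\pi\,\delta''(\theta)$, so that the whole computation reduces to two normal derivatives of $g$ on the diagonals $x=\pm y$ followed by a one-dimensional integral of $\sin^2$. This is shorter and more conceptual for the specific case $m=2$, and it avoids the combinatorial lemmas entirely; the paper's route, by contrast, is heavier but systematic, delivering the full polynomial structure of $\mu_m(t)$ for every even $m$ in one stroke. Your caveat about rigor is well placed: the cleanest way to legitimize the $\delta''$ step is exactly the alternative you mention, namely to work on the finite segment (where the $k$-sum is finite and the discrete orthogonality of the sine eigenvectors replaces the Dirac comb), compute, and pass to the limit. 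Either way, the two approaches agree, and the consistency checks you list with Propositions~\ref{th:stochastic-moments} and~\ref{th:stochastic-mixed-moments} confirm the coefficients.
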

\noindent
Proofs of Propositions~\ref{th:stochastic-mixed-moments} and
\ref{th:stochastic-mixed-second-moment} are given in
Appendix~\ref{app:mixed-stochastic-second-moment}.

From these propositions, we get that the asymptotic scaling exponent in the 
case of the 
Lindblad operator as in Eq~\eqref{eq:sum} is:
\begin{enumerate}
\item if $\omega \neq 1$  (general stochastic walk case) we have $\alpha = 2$,
\item if $\omega = 1$ (purely dissipative case) we have $\alpha = 1.$
\end{enumerate}

At first glance, this results seems to be quite surprising as the walk exhibits 
ballistic behavior as long as we have a finite amount of a coherent evolution. 
This effect can be explained as follows. If we have $\omega < 1$, then the 
quantum effects dominate the system, despite the measurement. However, if we 
turn off the coherent evolution, then we get a behavior determined only by the 
random disturbances of the system due to the measurement. In this case we get a 
behavior resembling a classical random walk in the sense that $\alpha = 1$.

On the other, when we consider the local dissipation, we get the opposite
result. In this case we get that $\alpha = 2$ if $\omega=0$ and $\alpha < 2$ if 
$\omega > 0$.
This can be explained easily. As stated in the introduction, the Lindblad terms
in this case introduce energy damping to the evolution of the system. If we have
a system driven by a Hamiltonian $H$ and perform a continuous energy damping on
it, then for large times $t$, the system becomes completely classical.

\section{Conclusions}

In this work we studied the behavior of quantum stochastic walks on a line and
a line segments. We found an analytical formula for the probability
distribution of the walk in the case of a global dissipation operator,
equivalent to a continuous position measurement. To measure the quantum
coherence of the walk, we used the asymptotic approach. Hence we investigated
the scaling of the second moment of the frequency distribution, by means of the
asymptotic scaling exponent, $\alpha$. There are other known approaches to
quantifying quantumness in a quantum walk model. One of the is the amount of
needed corrections to the classical stationary state of the walk to recover a
time averaged quantum distribution~\cite{faccin2013degree}. The main 
difference between this result and ours is the fact that our result applies to 
arbitrary initial state, whereas the result from~\cite{faccin2013degree} 
requires that the initial state is mostly in the ground space of the walk's 
Hamiltonian.

In the case of the stochastic walk on a line, with the global dissipation
operator the ballistic propagation was demonstrated, since for large walk time,
the relation $\alpha = 2$ holds. Importantly, such ballistic propagation occurs
for each strength of the dissipation as long as the Hamiltonian term is present
in the master equation. This result is surprising and in contrary to the
asymptotic behavior of the stochastic walk with local dissipation, equivalent
to continuous energy damping. In such case even small dissipation leads to the
classical propagation of the walk and $\alpha = 1$ for large time of the walk.

If we analyze the asymptotic behavior, the global dissipation leads to a
quantum-like propagation. This result is interesting while constructing quantum
system that is supposed to keep the coherence regardless of an environmental
interaction. Such observation may be useful in a quantum computer development,
where maintaining quantum coherence is crucial.

\section*{Acknowledgements}
KD and {\L}P acknowledge the support of the National Science Centre, Poland 
under project number 2014/15/B/ST6/05204. PS acknowledges the support of the 
National Science Centre, Poland under project number 2013/11/N/ST6/03030. AG 
and MO were supported by the Polish Ministry of Science and Higher Education 
under project number IP 2014 031073.

\appendix

\section{Proof of 
Proposition~\ref{th:asymptotic-prob}}\label{app:proof-th-asymptotic}
First, we will need a couple of technical lemmas:
\begin{lemma}[\cite{gradshteyn2014table}]\label{lem:sinFourthPower}
For arbitrary $n\in 
\mathbb N$ and $x\in\R$ such that $x\neq \frac{k\pi}{2}$ for $k\in \Z$  
\begin{equation}
\sum_{i=1}^n \sin^4(ix) = \frac{1}{8}\left (3n-4\frac{\cos\left((n+1)x\right) 
\sin(nx)}{\sin (x)} + 
\frac{\cos 
\left(2(n+1)x \right) \sin(2nx) }{\sin(2x)} \right ).
\end{equation}
\end{lemma}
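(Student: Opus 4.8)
The plan is to reduce the fourth power of the sine to a linear combination of cosines of multiple angles and then apply the standard closed form for a finite sum of cosines. First I would invoke the power-reduction identity
\begin{equation}
\sin^4\theta = \frac{3}{8} - \frac{1}{2}\cos(2\theta) + \frac{1}{8}\cos(4\theta),
\end{equation}
which follows from squaring $\sin^2\theta = (1-\cos 2\theta)/2$ and then applying $\cos^2\phi = (1+\cos 2\phi)/2$. Summing over $i$ from $1$ to $n$ with $\theta = ix$ splits the target sum into three pieces:
\begin{equation}
\sum_{i=1}^n \sin^4(ix) = \frac{3n}{8} - \frac{1}{2}\sum_{i=1}^n \cos(2ix) + \frac{1}{8}\sum_{i=1}^n \cos(4ix).
\end{equation}

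The core ingredient is the Dirichlet-type summation formula
\begin{equation}
\sum_{i=1}^n \cos(i\phi) = \frac{\sin(n\phi/2)}{\sin(\phi/2)}\cos\left(\frac{(n+1)\phi}{2}\right),
\end{equation}
valid whenever $\sin(\phi/2) \neq 0$. I would establish this either by writing $\cos(i\phi) = \operatorname{Re}(e^{\ii i\phi})$ and summing the resulting geometric series, or more directly by telescoping via the product-to-sum identity $2\sin(\phi/2)\cos(i\phi) = \sin\!\big((i+\tfrac12)\phi\big) - \sin\!\big((i-\tfrac12)\phi\big)$, after which a sum-to-product step on $\sin((n+\tfrac12)\phi) - \sin(\phi/2)$ yields the stated form. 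Applying this with $\phi = 2x$ gives $\sum_{i=1}^n \cos(2ix) = \sin(nx)\cos((n+1)x)/\sin(x)$, and with $\phi = 4x$ gives $\sum_{i=1}^n \cos(4ix) = \sin(2nx)\cos(2(n+1)x)/\sin(2x)$. Substituting both into the previous display and factoring out $1/8$ reproduces the claimed expression. The hypothesis $x \neq k\pi/2$ is exactly what keeps both denominators nonzero, since $\sin(2x)$ vanishes precisely when $x$ is an integer multiple of $\pi/2$, which in particular also covers the vanishing of $\sin x$.

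There is no serious obstacle: the whole argument is a routine combination of elementary trigonometric identities. The only point requiring mild care is the derivation of the cosine-sum formula, and the telescoping route is cleanest because it makes the denominator $\sin(\phi/2)$ appear naturally and pinpoints the excluded values of $x$ without any appeal to complex analysis. Since the statement is quoted from a standard table, one could alternatively cite it directly, but the self-contained derivation above is short enough to include in full.
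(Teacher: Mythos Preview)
Your derivation is correct: the power-reduction identity $\sin^4\theta = \tfrac38 - \tfrac12\cos 2\theta + \tfrac18\cos 4\theta$ followed by the Dirichlet-type cosine sum with $\phi = 2x$ and $\phi = 4x$ reproduces the stated formula exactly, and your remark about $x \neq k\pi/2$ correctly identifies the excluded values as precisely the zeros of $\sin 2x$.

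There is nothing to compare against, because the paper does not prove this lemma at all: it is stated with a citation to a standard table of integrals and series and used as a black box in the proof of Proposition~\ref{th:asymptotic-prob}. Your self-contained argument is therefore strictly more than what the paper provides, and it is the natural elementary proof one would expect for such an identity.
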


\begin{lemma}[\cite{gradshteyn2014table}]\label{lem:cosSum}
For arbitrary $n\in \mathbb N$ and $x\in\R$ such that $x 
\neq 2 k \pi$ for $k\in \Z$
\begin{equation}
\sum_{h=1}^n \cos(hx) = \frac{1}{2}\left(\frac{\sin 
\left(\frac{2n+1}{2}x\right)}{\sin \left(\frac{x}{2}\right)} -1\right).
\end{equation}
\end{lemma}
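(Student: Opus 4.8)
The plan is to prove the identity by the standard telescoping trick built on the product-to-sum formula, which is elementary and avoids appealing to complex exponentials. First I would record the identity
\[
2\sin\!\left(\frac{x}{2}\right)\cos(hx) = \sin\!\left(\left(h+\tfrac12\right)x\right) - \sin\!\left(\left(h-\tfrac12\right)x\right),
\]
which follows from the product-to-sum rule $2\sin A\cos B = \sin(A+B)+\sin(A-B)$ applied with $A = x/2$ and $B = hx$, together with the oddness of the sine applied to the $\sin(A-B) = \sin\!\left(\left(\tfrac12-h\right)x\right)$ term.

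Next I would multiply $\sum_{h=1}^n \cos(hx)$ by $2\sin(x/2)$ and substitute the above relation term by term. The resulting right-hand side telescopes: the summand at index $h$ is $\sin\!\left(\left(h+\tfrac12\right)x\right) - \sin\!\left(\left(h-\tfrac12\right)x\right)$, so every interior term cancels against its neighbour and only the two endpoints survive, yielding
\[
2\sin\!\left(\frac{x}{2}\right)\sum_{h=1}^n \cos(hx) = \sin\!\left(\left(n+\tfrac12\right)x\right) - \sin\!\left(\frac{x}{2}\right).
\]

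Finally, since the hypothesis $x \neq 2k\pi$ for all $k\in\Z$ guarantees $\sin(x/2)\neq 0$, I may divide both sides by $2\sin(x/2)$ and rewrite $n+\tfrac12 = \tfrac{2n+1}{2}$ to obtain
\[
\sum_{h=1}^n \cos(hx) = \frac{\sin\!\left(\frac{2n+1}{2}x\right) - \sin\!\left(\frac{x}{2}\right)}{2\sin\!\left(\frac{x}{2}\right)} = \frac12\left(\frac{\sin\!\left(\frac{2n+1}{2}x\right)}{\sin\!\left(\frac{x}{2}\right)} - 1\right),
\]
which is exactly the claimed formula.

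There is no genuine obstacle here: the only point that requires care is the non-vanishing of $\sin(x/2)$, which is precisely what the excluded set $x = 2k\pi$ removes. An alternative derivation would write $\sum_{h=1}^n\cos(hx) = \mathrm{Re}\sum_{h=1}^n e^{\ii hx}$, sum the geometric series, and take the real part; this route needs the same hypothesis to ensure $e^{\ii x}\neq 1$, but it is slightly more cumbersome, so I would present the telescoping argument as the main proof. For completeness I would note in passing that the degenerate case $x = 2k\pi$ gives $\sum_{h=1}^n\cos(hx) = n$ directly, so it is harmless that it lies outside the stated range.
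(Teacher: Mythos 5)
Your proof is correct. Note that the paper does not actually prove this lemma --- it is quoted directly from the tables of Gradshteyn and Ryzhik \cite{gradshteyn2014table} --- so there is no in-paper argument to compare against; your telescoping derivation via
\begin{equation}
2\sin\left(\frac{x}{2}\right)\cos(hx) = \sin\left(\left(h+\frac12\right)x\right) - \sin\left(\left(h-\frac12\right)x\right)
\end{equation}
is the standard Dirichlet-kernel computation and supplies a complete, self-contained justification. Every step checks out: the product-to-sum identity is applied correctly, the sum telescopes to $\sin\left(\frac{2n+1}{2}x\right) - \sin\left(\frac{x}{2}\right)$, and you correctly identify that the hypothesis $x \neq 2k\pi$ is exactly what licenses the division by $\sin(x/2)$. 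Your closing remarks --- the alternative geometric-series route via $\mathrm{Re}\sum_h e^{\ii hx}$ and the observation that the excluded case gives $\sum_{h=1}^n \cos(hx) = n$ --- are accurate and harmless additions.
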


Now, we are ready to prove Proposition~\ref{th:asymptotic-prob}

\begin{proof}[Proof of Proposition~\ref{th:asymptotic-prob}]
By Theorem~\ref{thr:probOnSegment} we have
\begin{equation}
\lim_{t\to \infty} \rho_{kk}=\left(\frac{2}{n+1}\right)^2 \sum_{i=1}^{n} \sin^2 
\left(\frac{ki\pi}{n+1}\right) \sin^2 \left(\frac{li\pi}{n+1}\right).
\end{equation}
Now we analyze the sum above. Suppose $k=l$. Then we have
\begin{equation}
\rho_{kk}= \left(\frac{2}{n+1}\right)^2 \sum_{i=1}^{n} \sin^4
\left(\frac{ i k\pi}{n+1}\right).
\end{equation}
Suppose $k\neq \frac{n+1}{2}$. By Lemma~\ref{lem:sinFourthPower} the sum above 
equals
\begin{equation}
\begin{split}
\rho_{kk} &=\left(\frac{2}{n+1}\right)^2 \frac{1}{8}\left (3n-4\frac{\cos(\pi
k) \sin\left(\frac{n\pi k}{n+1}\right)}{ \sin \left(\frac{\pi k}{n+1}\right)} +
\frac{\cos (2\pi k) \sin \left(\frac{2\pi n k}{n+1}\right)}{\sin
\left(\frac{2\pi k}{n+1}\right)} \right )\\
&=\left(\frac{2}{n+1}\right)^2\frac{1}{8}\left (3n+4(-1)^k
(-1)^k\frac{\sin\left(\frac{\pi k}{n+1}\right)}{\sin\left(\frac{\pi
k}{n+1}\right)} - \frac{\sin\left(\frac{2\pi
k}{n+1}\right)}{\sin\left(\frac{2\pi k}{n+1}\right)} \right )\\
&=\left(\frac{2}{n+1}\right)^2\frac{1}{8}(3n+4-1) = \frac{3}{2(n+1)},
\end{split}
\end{equation}
since $\cos(k \pi)= (-1)^k$ and $\sin\frac{nk\pi}{n+1}=-(-1)^k\sin\frac{\pi 
k}{n+1}$ and similarly for second formula.
Now suppose $k=\frac{n+1}{2}$. Then
\begin{equation}
\begin{split}
\rho_{kk}&= \left(\frac{2}{n+1}\right)^2 \sum_{i=1}^{n} \sin^4
\left(\frac{ik\pi}{n+1}\right) = \left(\frac{2}{n+1}\right)^2 \sum_{i=1}^{n} 
\sin^4 \left(\frac{i\pi}{2}\right) \\
&=  \left(\frac{2}{n+1}\right)^2\frac{2n+1+(-1)^{n+1}}{4} = 
\left(\frac{2}{n+1}\right)^2\frac{n+1}{2} = \frac{2}{n+1},\\
\end{split}
\end{equation}
since $l=\frac{n+1}{2}$ implies $n+1$ is even.

Suppose we have $k+l=n+1 $. Then 
\begin{equation}
\begin{split}
\rho_{kk}&=\left(\frac{2}{n+1}\right)^2 \sum_{i=1}^{n} \sin^2 
\left(\frac{(n+1-k)i\pi}{n+1}\right) \sin^2 \left(\frac{ik\pi}{n+1}\right) \\
&=\left(\frac{2}{n+1}\right)^2 \sum_{i=1}^{n} \sin^4 
\left(\frac{ik\pi}{n+1}\right)=\frac{3}{2(n+1)}.
\end{split}
\end{equation}

Suppose now that $k\neq l$ and $k\neq n+1 -l$. Then by formula 
\begin{equation}
\sin^2(x)\sin^2(y)= \frac{1}{8} (\cos(2x-2y) +\cos(2x+2y)-2\cos(2x)-2\cos(2y) 
+2)
\end{equation}
and by Lemma~\ref{lem:cosSum} we have
\begin{equation}
\begin{split}
\rho_{kk} &= \left(\frac{2}{n+1}\right)^2  \frac{1}{16} \left (2+4n -2 
\frac{\sin\left(\frac{l(2n+1)\pi}{n+1}\right)
}{\sin\left(\frac{l\pi}{n+1}\right)} + 
\frac{\sin\left(\frac{(k-l)(2n+1)\pi}{n+1}\right)
}{\sin\left(\frac{(k-l)\pi}{n+1}\right)}  \right . \\ &\phantom {=\ }+ \left . 
\frac{\sin\left(\frac{(k+l)(2n+1)\pi}{n+1}\right)
}{\sin\left(\frac{(k+l)\pi}{n+1}\right)} -2 
\frac{\sin\left(\frac{k(2n+1)\pi}{n+1}\right)
}{\sin\left(\frac{k\pi}{n+1}\right)}\right) \\
&=\frac{1}{4(n+1)^2}\left (2+4n +2 
\frac{\sin\left(\frac{l\pi}{n+1}\right)
}{\sin\left(\frac{l\pi}{n+1}\right)} - 
\frac{\sin\left(\frac{(k-l)\pi}{n+1}\right)
}{\sin\left(\frac{(k-l)\pi}{n+1}\right)}  - 
\frac{\sin\left(\frac{(k+l)\pi}{n+1}\right)
}{\sin\left(\frac{(k+l)\pi}{n+1}\right)} +2 
\frac{\sin\left(\frac{k\pi}{n+1}\right)
}{\sin\left(\frac{k\pi}{n+1}\right)}\right ) \\
&=\frac{4(n+1)}{4(n+1)^2}=\frac{1}{n+1}.
\end{split}
\end{equation}
\end{proof}

\section{Proof of Proposition
\ref{th:stochastic-moments}}\label{app:stochastic-moments}

First we need a couple of technical lemmas:
\begin{lemma}[\cite{gradshteyn2014table}] \label{lem:sumOfPowerBinomial}
For arbitrary $\alpha\in \R$, $n,m\in \N$ such that $m\leq n$  we have
\begin{equation}
\sum_{k=0}^{n} (-1)^k (k-\alpha)^m \binom{n}{k} = \begin{cases}
0, & m<n,\\
(-1)^n n! & m = n.
\end{cases}
\end{equation}
\end{lemma}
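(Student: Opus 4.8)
The plan is to recognise the left-hand side as an iterated finite difference of a polynomial and then use the elementary fact that such an operator annihilates low-degree polynomials. Let $E$ be the unit shift $Eg(x)=g(x+1)$ and $\Delta=E-\1$ the forward difference operator. Expanding $\Delta^n=(E-\1)^n$ by the binomial theorem gives $(\Delta^n g)(0)=\sum_{k=0}^n(-1)^{n-k}\binom{n}{k}g(k)$, hence for $f(x)=(x-\alpha)^m$ one has
\[
\sum_{k=0}^{n}(-1)^k\binom{n}{k}(k-\alpha)^m=(-1)^n(\Delta^n f)(0).
\]
Thus the whole statement reduces to the two assertions that $\Delta^n$ kills every polynomial of degree $m<n$, and sends the monic degree-$n$ polynomial $(x-\alpha)^n$ to the constant $n!$.

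To prove both at once I would pass to the exponential generating function in $m$. Weighting by $t^m/m!$ and summing over $m$ turns the inner sum into a closed form,
\[
\sum_{m\ge 0}\frac{t^m}{m!}\sum_{k=0}^{n}(-1)^k\binom{n}{k}(k-\alpha)^m=\sum_{k=0}^{n}(-1)^k\binom{n}{k}e^{(k-\alpha)t}=e^{-\alpha t}(1-e^t)^n,
\]
where the last equality is just the binomial theorem applied to $\sum_k\binom{n}{k}(-e^t)^k$. Since $1-e^t=-t+O(t^2)$, we have $(1-e^t)^n=(-1)^n t^n+O(t^{n+1})$, and multiplying by $e^{-\alpha t}=1+O(t)$ leaves this leading behaviour intact, so $e^{-\alpha t}(1-e^t)^n=(-1)^n t^n+O(t^{n+1})$. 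Matching the coefficient of $t^m/m!$ on the two sides then gives the sum equal to $0$ for every $m<n$ and equal to $(-1)^n n!$ for $m=n$, which is precisely the claim.

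The computation is entirely elementary, so I do not expect a genuine obstacle; the only delicate point is bookkeeping. One must check that the factor $e^{-\alpha t}$, which is where the dependence on $\alpha$ enters, does not disturb either the order of vanishing or the lowest coefficient $(-1)^n t^n$ of $(1-e^t)^n$ --- this is exactly what makes the final answer independent of $\alpha$. The sign tracking in the identity relating the alternating binomial sum to $(\Delta^n f)(0)$ is the other place where care is needed, but it is routine.
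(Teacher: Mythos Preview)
Your argument is correct. The finite-difference/generating-function route is the standard way to establish this identity: the key observations that $(1-e^t)^n$ has a zero of order exactly $n$ at $t=0$ with leading coefficient $(-1)^n$, and that the prefactor $e^{-\alpha t}$ contributes only at higher order, are both sound and yield the claim immediately.

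As for comparison with the paper: there is nothing to compare. The paper does not prove this lemma at all; it simply quotes it from the Gradshteyn--Ryzhik tables (hence the citation in the lemma heading). Your write-up therefore supplies a self-contained proof where the paper offers none. If anything, one could shorten your argument slightly by omitting the generating-function computation and appealing directly to the well-known fact that $\Delta$ lowers polynomial degree by exactly one and sends a monic polynomial of degree $d$ to a monic polynomial of degree $d-1$ times $d$, so that $\Delta^n$ annihilates polynomials of degree $<n$ and sends $(x-\alpha)^n$ to $n!$; but the exponential-generating-function version you give is equally clean and makes the $\alpha$-independence transparent.
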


\begin{lemma}[\cite{gradshteyn2014table}]\label{lem:sumDoubleBinomial}
For arbitrary $n,p\in\N$ such that $p\leq n$ we have
\begin{equation}
\sum_{k=0}^{n-p} \binom{n}{k}\binom{n}{p+k} = \binom{2n}{n-p}.
\end{equation}
\end{lemma}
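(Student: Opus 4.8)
The plan is to reduce the claimed identity to the Vandermonde convolution and then establish the latter by extracting a coefficient from a product of binomial expansions. First I would exploit the symmetry $\binom{n}{p+k}=\binom{n}{n-p-k}$, valid for every integer $k$ under the convention that $\binom{n}{j}=0$ whenever $j<0$ or $j>n$, to rewrite the left-hand side as
\begin{equation}
\sum_{k=0}^{n-p} \binom{n}{k}\binom{n}{p+k} = \sum_{k=0}^{n-p} \binom{n}{k}\binom{n}{(n-p)-k}.
\end{equation}
Because the terms with $k<0$ or $k>n-p$ already contribute zero, the range of summation may be extended to all of $\Z$ without changing the value, which is convenient for the coefficient-extraction step that follows.

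Next I would prove the Vandermonde convolution $\sum_{k}\binom{n}{k}\binom{n}{r-k}=\binom{2n}{r}$ for the special value $r=n-p$. The cleanest route is via generating functions: starting from the identity $(1+x)^n(1+x)^n=(1+x)^{2n}$, I compare the coefficient of $x^{r}$ on both sides. On the right-hand side this coefficient is $\binom{2n}{r}$ by the binomial theorem. On the left, expanding each factor and forming the Cauchy product, the coefficient of $x^{r}$ is exactly $\sum_{k}\binom{n}{k}\binom{n}{r-k}$. Setting $r=n-p$ and combining with the previous display then yields the assertion.

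A purely combinatorial argument gives the same conclusion and would serve as an alternative if one prefers to avoid formal power series: to choose $n-p$ objects from a ground set of $2n$ elements split into two halves $A$ and $B$ of size $n$ each, one selects $k$ elements from $A$ and the remaining $(n-p)-k$ from $B$, and summing over $k$ recovers $\binom{2n}{n-p}$. I do not expect any genuine obstacle here, since this is a classical identity; the only point requiring care is the bookkeeping of the summation limits, where one must check that extending or truncating the index range only introduces vanishing binomial coefficients. This is guaranteed precisely by the standing hypothesis $p\le n$.
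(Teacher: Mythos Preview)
Your argument is correct: the symmetry $\binom{n}{p+k}=\binom{n}{n-p-k}$ reduces the sum to the Vandermonde convolution, and your generating-function (or double-counting) verification of Vandermonde is sound. The only thing to note is that the paper does not supply its own proof of this lemma at all; it is simply quoted from a standard reference (Gradshteyn--Ryzhik) and used as a black box in the proof of Proposition~\ref{th:probabilityTaylorSeries}. So rather than differing from the paper's approach, your proposal fills in a proof where the paper has none, and the classical route you take is exactly what one would expect.
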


\begin{lemma}\label{lem:cosIntegral}
For arbitrary $k,l\in \N$ we have
\begin{equation}
\int_{-\pi}^{\pi}\cos(kx) \left[\cos(x)\right]^l \dd x = \begin{cases}
\frac{2\pi}{2^{l-k}} \binom{l-k}{\frac{l-k}{2}} \prod_{i=0}^{k-1} 
\frac{l-i}{l+k-2i}, & l \geq k \textrm{ and }l=k\mod 2,\\
0, & \textrm{otherwise.}\\
\end{cases}
\end{equation}
\end{lemma}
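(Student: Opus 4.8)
The plan is to evaluate the integral directly by passing to complex exponentials and exploiting the orthogonality relation $\int_{-\pi}^{\pi}\ee^{\ii m x}\,\dd x = 2\pi\,\delta_{m,0}$ for $m\in\Z$. First I would write $\cos(kx)=\tfrac12\left(\ee^{\ii k x}+\ee^{-\ii k x}\right)$ and expand the power through the binomial theorem as $[\cos x]^l = 2^{-l}\sum_{j=0}^{l}\binom{l}{j}\ee^{\ii(l-2j)x}$. Multiplying these out turns the integrand into a linear combination of exponentials of the form $\ee^{\ii(\pm k + l - 2j)x}$, so integrating term by term over $[-\pi,\pi]$ annihilates every summand except those whose frequency vanishes identically.

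The surviving contributions come from $j=\tfrac{l+k}{2}$ (produced by the $\ee^{\ii k x}$ piece) and $j=\tfrac{l-k}{2}$ (produced by the $\ee^{-\ii k x}$ piece). Both require $l\equiv k\pmod 2$ for $j$ to be an integer, and the range constraint $0\le j\le l$ forces $l\ge k$: if either condition fails, $j$ is non-integral or out of range, no term survives, and the integral is $0$, which is exactly the second case of the statement. When both conditions hold, the two surviving binomials coincide since $\binom{l}{(l+k)/2}=\binom{l}{(l-k)/2}$, and gathering the prefactors $\tfrac12\cdot 2^{-l}\cdot 2\pi$ together with the doubled binomial yields the compact closed form $\int_{-\pi}^{\pi}\cos(kx)[\cos x]^l\,\dd x = \tfrac{2\pi}{2^{l}}\binom{l}{(l-k)/2}$. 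The case $k=0$ is consistent: the two indices merge to $j=l/2$, and the symmetric $\tfrac12+\tfrac12$ split of $\cos(0\cdot x)=1$ reproduces the same value without double counting.

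What remains is to reconcile this compact form with the product expression appearing in the statement, and this factorial bookkeeping is the only genuine (if routine) obstacle. Setting $p=\tfrac{l-k}{2}$, so that $l=2p+k$ and $\binom{l-k}{(l-k)/2}=\binom{2p}{p}$, the claim reduces to the algebraic identity $\binom{2p+k}{p}=2^{k}\binom{2p}{p}\prod_{i=0}^{k-1}\frac{l-i}{l+k-2i}$. Here the numerator product telescopes as $\prod_{i=0}^{k-1}(2p+k-i)=(2p+k)!/(2p)!$, while the denominator telescopes as $\prod_{i=0}^{k-1}(2p+2k-2i)=2^{k}\,(p+k)!/p!$. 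Substituting $\binom{2p}{p}=(2p)!/(p!)^2$ then collapses the right-hand side to $(2p+k)!/\big(p!\,(p+k)!\big)=\binom{2p+k}{p}$, matching the left-hand side and hence the compact form. This confirms that the two expressions agree and completes the proof.
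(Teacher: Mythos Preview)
Your argument is correct. The direct expansion via complex exponentials and orthogonality cleanly produces the compact value $\tfrac{2\pi}{2^{l}}\binom{l}{(l-k)/2}$, and your factorial telescoping correctly shows this equals the product form in the statement.

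This is a genuinely different route from the paper's proof. The paper invokes the reduction formula
\[
\int_{-\pi}^{\pi}[\cos x]^{l}\cos(kx)\,\dd x=\frac{l}{l+k}\int_{-\pi}^{\pi}[\cos x]^{l-1}\cos((k-1)x)\,\dd x
\]
(taken from a table of integrals) and iterates it: if $l<k$ the iteration bottoms out at $\int_{-\pi}^{\pi}\cos((k-l)x)\,\dd x=0$, while if $l\ge k$ it bottoms out at $\int_{-\pi}^{\pi}[\cos x]^{l-k}\,\dd x$, for which the paper cites another standard identity. The product $\prod_{i=0}^{k-1}\tfrac{l-i}{l+k-2i}$ arises naturally as the accumulated prefactor from $k$ successive applications of the recursion, so the paper never needs your final algebraic reconciliation step. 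Conversely, your approach is self-contained (no external integral tables) and yields the simpler single-binomial closed form immediately; the price is the short bookkeeping paragraph at the end to match the stated product expression. Either method is perfectly adequate here.
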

\begin{proof}
Using the formula \cite{gradshteyn2014table}
\begin{equation}
\int \left[\cos(x)\right]^l \cos (kx) \dd x=
\frac{1}{l+k}\left[\left(\cos(x\right))^l 
\sin(kx)+l\int\left[\cos(x)\right]^{l-1}
\cos((k-1)x)\dd x\right]
\end{equation}
we obtain  
\begin{equation}\label{eq:integralRecurence}
\int_{-\pi}^\pi \left[\cos(x)\right]^l \cos (kx)\dd x = 
\frac{l}{l+k}\int_{-\pi}^\pi \left[\cos(x) \right]^{l-1}
\cos((k-1)x)\dd x.
\end{equation}
Moreover for arbitrary $l\in \N$ we have \cite{gradshteyn2014table}
\begin{equation}
\int \left[\cos (x) \right]^{2l} \dd x = \frac{1}{4^l}\binom{2l}{l}x +
\frac{2}{4^l} \sum_{k=0} ^{l-1} \binom{2l}{k}\frac{\sin ((2l-2k)x)}{2l-2k},
\end{equation}
which provides us the formula
\begin{equation}\label{eq:cosPowerIntegral}
\int_{-\pi}^\pi \left[ \cos (x)\right]^{2l} \dd x = 
\frac{2\pi}{4^l}\binom{2l}{l}.
\end{equation}

Suppose $l<k$. Then using Eq.~\eqref{eq:integralRecurence} we have 
\begin{equation}
\int_{-\pi}^\pi \left[\cos(x)\right]^l \cos (kx)\dd x = \prod_{i=0}^{l-1} 
\frac{l-i}{l+k-2i} 
\int_{-\pi}^{\pi} \cos((k-l)x) \dd x = 0.
\end{equation}
If $l\geq k$, then we obtain
\begin{equation}
\int_{-\pi}^\pi \left[ \cos(x)\right]^l \cos(kx) = \prod_{i=0}^{k-1} 
\frac{l-i}{l+k-2i} 
\int_{-\pi}^{\pi} \cos^{l-k} (x) \dd x.
\end{equation}
If $l - k$ is odd, then the integral equals 0. Otherwise using 
Eq.~\eqref{eq:cosPowerIntegral} we have
\begin{equation}
\int_{-\pi}^\pi\cos(kx)\cos^l(x) = 
\frac{2\pi}{2^{l-k}}\binom{l-k}{\frac{l-k}{2}}\prod_{i=0}^{k-1} 
\frac{l-i}{l+k-2i}.
\end{equation}
\end{proof}

\begin{proposition}\label{th:probabilityTaylorSeries}
For a stochastic walk on a line with an initial state $\rho(0) = \proj{0}$ and 
$\omega =1$, the $\rho_{kk}(t)$ has series representation with respect to 
$t$ of the form
\begin{equation}
\rho_{kk}(t)=  \sum_{n=|k|}^\infty 
\frac{(-1)^{n+k}}{8^n}\binom{2n}{n}\binom{2n}{n+k}\frac{t^n}{n!}.
\end{equation}
\end{proposition}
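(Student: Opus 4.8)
The plan is to read off the Taylor coefficients directly from the closed form in Theorem~\ref{th:stochastic-prob}. Setting $\omega=1$ kills the oscillatory factor, leaving
\[
\rho_{kk}(t)=\frac{1}{4\pi^2}\int_{-\pi}^\pi\int_{-\pi}^\pi \cos(kx)\cos(ky)\exp\left[-\frac{t}{2}(\cos x-\cos y)^2\right]\dd x\,\dd y .
\]
I would expand the weight as $\sum_{n\ge 0}\frac{(-t/2)^n}{n!}(\cos x-\cos y)^{2n}$. Since $(\cos x-\cos y)^2\le 4$ on the square, the $n$-th term is bounded in modulus by $(2t)^n/n!$, so the series of absolute values converges uniformly to $\ee^{2t}$; hence summation and integration may be interchanged and each power of $t$ acquires a finite integral.

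Next I would apply the binomial theorem, $(\cos x-\cos y)^{2n}=\sum_{j}\binom{2n}{j}(-1)^{j}(\cos x)^j(\cos y)^{2n-j}$ (using $(-1)^{2n-j}=(-1)^j$), which factorizes the double integral into products of one-dimensional integrals $\int_{-\pi}^\pi\cos(kx)(\cos x)^j\,\dd x$. Evaluating these with Lemma~\ref{lem:cosIntegral} (equivalently, by expanding $\cos x$ and $\cos(kx)$ in complex exponentials) gives $\frac{2\pi}{2^{j}}\binom{j}{(j+k)/2}$ whenever $j\ge|k|$ and $j\equiv k\pmod 2$, and $0$ otherwise. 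Imposing the range-and-parity constraints on both factors forces $|k|\le j\le 2n-|k|$, which is nonempty precisely when $n\ge|k|$; this yields the lower summation limit. Substituting $j=|k|+2s$ collapses the inner sum to one over $s$, and the constants $\frac{1}{4\pi^2}(2\pi)^2$, $4^{-n}$, $(-1/2)^n/n!$ together with the sign $(-1)^{|k|}$ combine into the prefactor $\frac{(-1)^{n+|k|}}{8^n\,n!}$.

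What remains, and what I expect to be the only genuine obstacle, is a binomial identity: that the inner sum equals $\binom{2n}{n}\binom{2n}{n+k}$. I would establish it by first reducing the triple product with the subset-of-a-subset identity $\binom{a}{b}\binom{b}{c}=\binom{a}{c}\binom{a-c}{b-c}$ applied a few times (or, more bluntly, by comparing factorials), turning $\binom{2n}{|k|+2s}\binom{|k|+2s}{|k|+s}\binom{2n-|k|-2s}{n-s}$ into $\binom{2n}{n}\binom{n}{s}\binom{n}{|k|+s}$. Pulling $\binom{2n}{n}$ out of the sum, Lemma~\ref{lem:sumDoubleBinomial} (Vandermonde convolution) gives $\sum_s\binom{n}{s}\binom{n}{|k|+s}=\binom{2n}{n+|k|}$, so the inner sum is $\binom{2n}{n}\binom{2n}{n+|k|}$ as required. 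Since $\cos(kx)$ is even in $k$ the whole expression depends only on $|k|$, which matches the stated lower limit $n=|k|$ and completes the series representation.
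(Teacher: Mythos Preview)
Your proposal is correct and follows essentially the same route as the paper: Taylor-expand the Gaussian weight in Theorem~\ref{th:stochastic-prob}, binomially expand $(\cos x-\cos y)^{2n}$, evaluate the resulting Fourier integrals via Lemma~\ref{lem:cosIntegral}, simplify the triple binomial product, and close with the Vandermonde identity of Lemma~\ref{lem:sumDoubleBinomial}. Your execution is in fact a bit cleaner---you justify the interchange of sum and integral (which the paper omits) and you reduce the triple product directly by factorials, whereas the paper reaches the same intermediate form by computing an initial term $A_{n,k,k}$ and a ratio $A_{n,k,l+2}/A_{n,k,l}$.
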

\begin{proof} Since the elements $\rho_{kk}(t)$ are symmetric with respect 
to $k=0$, we assume $k\geq0$. By Theorem~\ref{th:stochastic-prob} we have
\begin{equation}
\rho_{kk}(t)=\frac{1}{4\pi^2}\int_{-\pi}^\pi\int_{-\pi}^\pi 
\cos(kx)\cos(ky)\exp\left[-\frac{t}{2}(\cos(x)-\cos(y))^2\right]\dd x \dd y.
\end{equation}
Suppose we have the Taylor series representation $\rho_{kk}(t) = 
\sum_{n=0}^{\infty} 
\frac{A_{n,k}}{n!}t^n$. Then $A_{n,k}$ is of the form
\begin{equation}
\begin{split}
A_{n,k} &= \frac{(-1)^n}{4\cdot2^n\pi^2}\int_{-\pi}^\pi\int_{-\pi}^\pi 
\cos(kx)\cos(ky)\left 
(\cos(x)-\cos(y)\right)^{2n}\dd x \dd y \\
&= \frac{(-1)^n}{4\cdot2^n\pi^2}\sum_{l=0}^{2n}  
\binom{2n}{l}(-1)^l\int_{-\pi}^\pi\cos(kx) \left[\cos(x) \right]^l \dd 
x\int_{-\pi}^\pi 
\cos(ky) \left[\cos (y)\right]^{2n-l} \dd y.
\end{split}
\end{equation}

Let us define for simplicity 
\begin{equation}
A_{n,k,l} =
\binom{2n}{l}(-1)^l\int_{-\pi}^\pi\cos(kx)\left[\cos(x)\right]^l \dd 
x\int_{-\pi}^\pi 
\cos(ky) \left[\cos(y)\right]^{2n-l} \dd y.
\end{equation}
By Lemma~\ref{lem:cosIntegral} we have that $A_{n,k,l}$ is non-zero when $k-l$ 
is even and has the form
\begin{equation}
A_{n,k,l}  = 
\frac{(-1)^l 4\pi^2}{2^{2n-2k}} \binom{2n}{l} \binom{l-k}{\frac{l-k}{2}} 
\binom{2n-l-k}{n-\frac{l+k}{2}} 
\prod_{i=0}^{k-1} \frac{l-i}{l+k-2i} \frac{2n-l-i}{2n-l+k-2i}. 
\end{equation}
In particular, we note that for $n<k$ we have $A_{n,k}=0$. 

Again it is straightforward to find
\begin{equation}
A_{n,k,k} = \frac{(-1)^k 4\pi^2}{4^{n}} \binom{2n}{n}\binom{n}{k}
\end{equation}
and 
\begin{equation}
\frac{A_{n,k,l+2}}{A_{n,k,l}} = \frac{(n - \frac{l-k}{2})(n - 
\frac{l-k}{2})}{(\frac{l+k}{2}+1)(\frac{l-k}{2}+1)}.
\end{equation}
Note that we increment $l$ by two instead of one because of the assumption that
$l-k$ is even. One can verify, that the $A_{n,k,l}$ is of the form
\begin{equation}
A_{n,k,l} = 
\frac{(-1)^k4\pi^2}{4^n}\binom{2n}{n}\binom{n}{\frac{l+k}{2}} 
\binom{n}{\frac{l-k}{2}},
\end{equation}
and hence have
\begin{equation}
\begin{split}
A_{n,k} &=  \frac{(-1)^n}{4\cdot2^n\pi^2} 
\sum_{l\in\{k,k+2,\dots,2n-k\}}A_{n,k,l} \\
&= 
\frac{(-1)^{n+k}}{8^n}\binom{2n}{n}\sum_{l\in\{k,k+2,\dots,2n-k\}} 
\binom{n}{\frac{l+k}{2}}\binom{n}{\frac{l-k}{2}}\\
&= \frac{(-1)^{n+k}}{8^n}\binom{2n}{n}\sum_{l=0}^{n-k}
\binom{n}{l+k}\binom{n}{l}\\
&=\frac{(-1)^{n+k}}{8^n}\binom{2n}{n} \binom{2n}{n+k},
\end{split}
\end{equation}
where in the third line we change the indices range and in the last line we use 
Lemma~\ref{lem:sumDoubleBinomial}.
\end{proof}

Now we are ready to prove Proposition~\ref{th:stochastic-moments}.

\begin{proof}[Proof of Proposition \ref{th:stochastic-moments}]
Note that odd moments equals 0 by symmetry of the probability distribution. 
Suppose $m$ is even and $m>0$. Then by 
Proposition~\ref{th:probabilityTaylorSeries} 
we have
\begin{equation}
\begin{split}
\mu_m(t) &= \sum_{k=-\infty}^{\infty} k^m  \sum_{n=|k|}^\infty 
\frac{(-1)^{n+k}}{8^n}\binom{2n}{n}\binom{2n}{n+k}\frac{t^n}{n!}\\
&= 
\sum_{n=0}^{\infty}\frac{(-1)^n}{8^n}\binom{2n}{n}\frac{t^n}{n!}\sum_{k=-n}^{n}k^m
 (-1)^k\binom{2n}{n+k}\\
 &=\sum_{n=0}^{\infty}\frac{1}{8^n}\binom{2n}{n}\frac{t^n}{n!} 
 \sum_{k=0}^{2n}(k-n)^m   (-1)^{k}\binom{2n}{k}.
\end{split}
\end{equation}
By Lemma~\ref{lem:sumOfPowerBinomial} formula above can be simplified
\begin{equation}
\mu_m(t) = 
\sum_{n=1}^{\frac{m}{2}}\frac{1}{8^n}\binom{2n}{n}\frac{t^n}{n!}  
\sum_{k=0}^{2n}(k-n)^m   (-1)^{k}\binom{2n}{k},
\end{equation}
hence the $m$-th central moment is a polynomial of degree $\frac{m}{2}$ with 
respect to $t$. Moreover, the coefficient next to $t^\frac{m}{2}$ is
\begin{equation}
a_{\frac{m}{2}} = \frac{m!}{\left( \frac{m}{2} \right)! 
8^\frac{m}{2}}\binom{m}{\frac{m}{2}}. 
\end{equation}
\end{proof}

\section{Proof of Propositions~\ref{th:stochastic-mixed-moments} and 
\ref{th:stochastic-mixed-second-moment}} 
\label{app:mixed-stochastic-second-moment}

\begin{proof}[proof of Proposition~\ref{th:stochastic-mixed-moments}]
Case $\omega=1$ was already considered in Theorem~\ref{th:stochastic-moments}.
Suppose $\omega\in(0,1)$. As in Appendix~\ref{app:stochastic-moments}, we start
by finding the Taylor series representation for the probability distribution. 
If we denote
$\rho_{kk}(t)=\sum_{n=0}^{\infty}\frac{B_{n,k}}{n!}t^n$, then one can find that
$B_{n,k}$ is of the form
\begin{equation}
\begin{split}
B_{n,k} &=  \sum_{l=0}^{n } 
\binom{n}{l}\frac{1}{4\pi^2}\int_{-\pi}^\pi\int_{-\pi}^\pi 
\cos(kx)\cos(ky)\frac{(-\omega)^{n-l}}{2^{n-l}}\times  \\
&\phantom{\ =}\times (\cos(x)-\cos(y))^{2n-l} 2^l  \ii^l (1-\omega)^l \dd 
x\dd y .
\end{split}
\end{equation}
Since $\rho_{kk}(t) \in \R$, we can exclude the imaginary terms and we can
simplify the formula
 \begin{equation}
 \begin{split}
 B_{n,k} &=\sum_{l=0}^{\lfloor \frac{n}{2}\rfloor} \binom{n}{2l}
 8^{l}\omega^{n-2l}(1-\omega)^{2l} 
 \times \\
 &\phantom{\ =}\times \frac{(-1)^{n-l}}{2^{n-l} 
  4\pi^2}\int_{-\pi}^\pi\int_{-\pi}^\pi 
 \cos(kx)\cos(ky)(\cos(x)-\cos(y))^{2n-2l} \dd  x\dd y \\
 &=\sum_{l=0}^{\lfloor \frac{n}{2}\rfloor} \binom{n}{2l} 
  8^{l}\omega^{n-2l}(1-\omega)^{2l}
  A_{n-l,k}.
   \end{split}
 \end{equation}
 
From Appendix~\ref{app:stochastic-moments} we know, that $A_{n,k}$ is of the 
form
\begin{equation}
A_{n,k} = \begin{cases}
0, & |k|>n,\\
\frac{(-1)^{n+k}}{8^n} \binom{2n}{n}  \binom{2n}{n+k}, & |k|\leq n.
\end{cases}
\end{equation}
In our case we have the condition $|k|\leq n-\frac{l}{2}\leq n$. Hence we
conclude, that $B_{n,k}$ is of the form
\begin{equation}
B_{n,k}  = \frac{(-1)^{n+k}}{8^n}\sum_{l=0}^{\min 
(\lfloor\frac{n}{2}\rfloor,n-|k|)} 
\binom{n}{2l}
8^{2l}\omega^{n-2l}(1-\omega)^{2l}  
(-1)^{l}
\binom{2n-2l}{n-l}  
\binom{2n-2l}{n-l+k}.
\end{equation}

Therefore for even $m$ we have
\begin{equation}
\begin{split}
\mu_m(t)&= \sum_{k=-\infty}^{\infty} k^m  \sum_{n=0}^\infty 
B_{n,k}\frac{t^n}{n!}\\
&=\sum_{n=0}^{\infty}\frac{t^n}{n!}\sum_{k=-n}^{n}k^mB_{n,k}\\
&=\sum_{n=0}^{\infty}\frac{(-1)^nt^n}{8^nn!}\sum_{k=-n}^{n}k^m 
(-1)^k\sum_{l=0}^{\min 
(\lfloor\frac{n}{2}\rfloor,n-|k|)} 
\binom{n}{2l}
8^{2l}\omega^{n-2l}(1-\omega)^{2l}  
(-1)^{l}
\binom{2n-2l}{n-l}  
\binom{2n-2l}{n-l+k}\\
&=\sum_{n=0}^\infty \sum_{k=-n}^{n}\sum_{l=0}^{\min 
(\lfloor\frac{n}{2}\rfloor,n-|k|)} C_{n,l,t} \omega^{n-2l}(1-\omega)^{2l}(-1)^k 
k^m  \binom{2n-2l}{n-l+k},\\
&=\sum_{n=0}^\infty \sum_{l=0}^{\lfloor\frac{n}{2}\rfloor} 
C_{n,l,t} 
\omega^{n-2l}(1-\omega)^{2l} \sum_{k=-(n-l)}^{n-l} (-1)^k 
k^m  \binom{2n-2l}{n-l+k},\\
&=\sum_{n=0}^\infty \sum_{l=0}^{\lfloor\frac{n}{2}\rfloor} 
C_{n,l,t} 
\omega^{n-2l}(1-\omega)^{2l} (-1)^{n-l}\sum_{k=0}^{2n-2l} (-1)^k 
(k-(n-l))^m  \binom{2n-2l}{k},\label{eq:mth-moment-mixed}\\
\end{split}
\end{equation}
where 
\begin{equation}
C_{n,t,l}= \frac{(-1)^nt^n}{8^nn!}  8^{2l}(-1)^l 
\binom{n}{2l}\binom{2n-2l}{n-l} .
\end{equation}
Let us denote
\begin{equation}
\alpha_{2n-2l,m} = \sum_{k=0}^{2n-2l} (-1)^k 
(k-(n-l))^m  \binom{2n-2l}{k}.
\end{equation} 
From Lemma~\ref{lem:sumOfPowerBinomial} $\alpha_{2n-2l,m}$ is nonzero if 
$m\geq2n-2l\iff l\geq n-\frac{m}{2}$. Hence Eq.~\eqref{eq:mth-moment-mixed} 
can be simplified
\begin{equation}
\begin{split}
\mu_m(t) = \sum_{n=0}^\infty \sum_{l=n-\frac{m}{2}}^{\lfloor\frac{n}{2}\rfloor} 
C_{n,l,t} (-1)^{n-l}
\omega^{n-2l}(1-\omega)^{2l}\alpha_{2n-2l,m}
\end{split}
\end{equation}
The condition $n - \frac m2 \leq l\leq\lfloor\frac{n}{2}\rfloor$ implies $0 
\leq n \leq m$. Hence, we have
\begin{equation}
\begin{split}
\mu_m(t) &= \sum_{n=0}^m \sum_{l=n-\frac{m}{2}}^{\lfloor\frac{n}{2}\rfloor} 
C_{n,l,t} (-1)^{n-l}
\omega^{n-2l}(1-\omega)^{2l}\alpha_{2n-2l,m}\\
&= \sum_{n=0}^m \sum_{l=n-\frac{m}{2}}^{\lfloor\frac{n}{2}\rfloor}  
\frac{(-1)^nt^n}{8^nn!}  8^{2l}(-1)^n 
\binom{n}{2l}\binom{2n-2l}{n-l} 
\omega^{n-2l}(1-\omega)^{2l}\alpha_{2n-2l,m}\\
&= \sum_{n=0}^{m}\beta_{n,\omega}t^n.\label{eq:mixed-moment-general}
\end{split}
\end{equation}
Let us calculate the leading term, $\beta_{m,\omega}$. Then we have 
$l\in\{\frac{m}{2}\}
$ and $n=m$ and $m$ is even.
\begin{equation}
\begin{split}
\beta_m &= \sum_{l=m-\frac{m}{2}}^{\lfloor\frac{m}{2}\rfloor} 
\frac{(-1)^m}{8^mm!}  8^{2l}(-1)^m 
\binom{m}{2l}\binom{2m-2l}{m-l} \omega^{m-2l}(1-\omega)^{2l}\alpha_{2m-2l,m}\\
&= \frac{1}{8^mm!}  8^{m}
\binom{m}{m}\binom{m}{\frac{m}{2}}(1-\omega)^{m} \alpha_{m,m}\\
&= \frac{1}{m!} \binom{m}{\frac{m}{2}} (1-\omega)^{m} 
(-1)^mm! =  \binom{m}{\frac{m}{2}} (1-\omega)^{m},
\end{split}
\end{equation}
where we used the fact, that $\alpha_{m,m} =(-1)^mm!$ by 
Lemma~\ref{lem:sumOfPowerBinomial}.
\end{proof}

\bibliographystyle{ieeetr}
\bibliography{Bibliografia}

\begin{thebibliography}{10}

\bibitem{gonis2001decoherence}
A.~Gonis and P.~E. Turchi, {\em Decoherence and its implications in quantum
  computation and information transfer}, vol.~182.
\newblock IOS press, 2001.

\bibitem{amin2009decoherence}
M.~H. Amin, D.~V. Averin, and J.~A. Nesteroff, ``Decoherence in adiabatic
  quantum computation,'' {\em Physical Review A}, vol.~79, no.~2, p.~022107,
  2009.

\bibitem{kossakowski1972quantum}
A.~Kossakowski, ``On quantum statistical mechanics of non-hamiltonian
  systems,'' {\em Reports on Mathematical Physics}, vol.~3, no.~4,
  pp.~247--274, 1972.

\bibitem{lindblad1976generators}
G.~Lindblad, ``On the generators of quantum dynamical semigroups,'' {\em
  Communications in Mathematical Physics}, vol.~48, no.~2, pp.~119--130, 1976.

\bibitem{gorini1976completely}
V.~Gorini, A.~Kossakowski, and E.~C.~G. Sudarshan, ``Completely positive
  dynamical semigroups of n-level systems,'' {\em Journal of Mathematical
  Physics}, vol.~17, no.~5, pp.~821--825, 1976.

\bibitem{whitfield2010quantum}
J.~D. Whitfield, C.~A. Rodr{\'\i}guez-Rosario, and A.~Aspuru-Guzik, ``Quantum
  stochastic walks: A generalization of classical random walks and quantum
  walks,'' {\em Physical Review A}, vol.~81, no.~2, p.~022323, 2010.

\bibitem{venegas2012quantum}
S.~E. Venegas-Andraca, ``Quantum walks: a comprehensive review,'' {\em Quantum
  Information Processing}, vol.~11, no.~5, pp.~1015--1106, 2012.

\bibitem{robens2015ideal}
C.~Robens, W.~Alt, D.~Meschede, C.~Emary, and A.~Alberti, ``Ideal negative
  measurements in quantum walks disprove theories based on classical
  trajectories,'' {\em Physical Review X}, vol.~5, no.~1, p.~011003, 2015.

\bibitem{stefanak2010interference}
M.~Stefanak, ``Interference phenomena in quantum information,'' {\em arXiv
  preprint arXiv:1009.0200}, 2010.

\bibitem{vieira2013dynamically}
R.~Vieira, E.~P. Amorim, and G.~Rigolin, ``Dynamically disordered quantum walk
  as a maximal entanglement generator,'' {\em Physical review letters},
  vol.~111, no.~18, p.~180503, 2013.

\bibitem{kurzynski2013quantum}
P.~Kurzy{\'n}ski and A.~W{\'o}jcik, ``Quantum walk as a generalized measuring
  device,'' {\em Physical review letters}, vol.~110, no.~20, p.~200404, 2013.

\bibitem{kollar2014entropy}
B.~Koll{\'a}r and M.~Koniorczyk, ``Entropy rate of message sources driven by
  quantum walks,'' {\em Physical Review A}, vol.~89, no.~2, p.~022338, 2014.

\bibitem{kitagawa2010exploring}
T.~Kitagawa, M.~S. Rudner, E.~Berg, and E.~Demler, ``Exploring topological
  phases with quantum walks,'' {\em Physical Review A}, vol.~82, no.~3,
  p.~033429, 2010.

\bibitem{arnault2016quantum}
P.~Arnault and F.~Debbasch, ``Quantum walks and discrete gauge theories,'' {\em
  Physical Review A}, vol.~93, no.~5, p.~052301, 2016.

\bibitem{chandrashekar2010relationship}
C.~Chandrashekar, S.~Banerjee, and R.~Srikanth, ``Relationship between quantum
  walks and relativistic quantum mechanics,'' {\em Physical Review A}, vol.~81,
  no.~6, p.~062340, 2010.

\bibitem{chia2016coherent}
A.~Chia, A.~G{\'o}recka, T.~K. C, {\L}.~Pawela, K.~P, T.~Paterek, and
  D.~Kaszlikowski, ``Coherent chemical kinetics as quantum walks i: Reaction
  operators for radical pairs,'' {\em Physical Review E}, vol.~93, p.~032407,
  2016.

\bibitem{metzler2014anomalous}
R.~Metzler, J.-H. Jeon, A.~G. Cherstvy, and E.~Barkai, ``Anomalous diffusion
  models and their properties: non-stationarity, non-ergodicity, and ageing at
  the centenary of single particle tracking,'' {\em Physical Chemistry Chemical
  Physics}, vol.~16, no.~44, pp.~24128--24164, 2014.

\bibitem{spiechowicz2016transient}
J.~Spiechowicz, J.~{\L}uczka, and P.~H{\"a}nggi, ``Transient anomalous
  diffusion in periodic systems: ergodicity, symmetry breaking and velocity
  relaxation,'' {\em Scientific Reports}, vol.~6, 2016.

\bibitem{spiechowicz2015diffusion}
J.~Spiechowicz and J.~{\L}uczka, ``Diffusion anomalies in ac-driven brownian
  ratchets,'' {\em Physical Review E}, vol.~91, no.~6, p.~062104, 2015.

\bibitem{klages2008anomalous}
R.~Klages, G.~Radons, and I.~M. Sokolov, {\em Anomalous transport: foundations
  and applications}.
\newblock John Wiley \& Sons, 2008.

\bibitem{kempe2003quantum}
J.~Kempe, ``Quantum random walks: an introductory overview,'' {\em Contemporary
  Physics}, vol.~44, no.~4, pp.~307--327, 2003.

\bibitem{caldeira1983path}
A.~O. Caldeira and A.~J. Leggett, ``Path integral approach to quantum brownian
  motion,'' {\em Physica A: Statistical mechanics and its Applications},
  vol.~121, no.~3, pp.~587--616, 1983.

\bibitem{prokof2006decoherence}
N.~Prokof’ev and P.~Stamp, ``Decoherence and quantum walks: Anomalous
  diffusion and ballistic tails,'' {\em Physical Review A}, vol.~74, no.~2,
  p.~020102, 2006.

\bibitem{jacobs2006straightforward}
K.~Jacobs and D.~A. Steck, ``A straightforward introduction to continuous
  quantum measurement,'' {\em Contemporary Physics}, vol.~47, no.~5,
  pp.~279--303, 2006.

\bibitem{pasquini2006tridiagonal}
S.~Noschese, L.~Pasquini, and L.~Reichel, ``Tridiagonal toeplitz matrices:
  properties and novel applications,'' {\em Numerical linear algebra with
  applications}, vol.~20, no.~2, pp.~302--326, 2013.

\bibitem{havlin1982fractal}
S.~Havlin and D.~Ben-Avraham, ``Fractal dimensionality of polymer chains,''
  {\em Journal of Physics A: Mathematical and General}, vol.~15, no.~6,
  p.~L311, 1982.

\bibitem{haus1987diffusion}
J.~Haus and K.~Kehr, ``Diffusion in regular and disordered lattices,'' {\em
  Physics Reports}, vol.~150, no.~5, pp.~263--406, 1987.

\bibitem{havlin1987diffusion}
S.~Havlin and D.~Ben-Avraham, ``Diffusion in disordered media,'' {\em Advances
  in Physics}, vol.~36, no.~6, pp.~695--798, 1987.

\bibitem{grimmett2004weak}
G.~Grimmett, S.~Janson, and P.~F. Scudo, ``Weak limits for quantum random
  walks,'' {\em Physical Review E}, vol.~69, no.~2, p.~026119, 2004.

\bibitem{faccin2013degree}
M.~Faccin, T.~Johnson, J.~Biamonte, S.~Kais, and P.~Migda{\l}, ``Degree
  distribution in quantum walks on complex networks,'' {\em Physical Review X},
  vol.~3, no.~4, p.~041007, 2013.

\bibitem{gradshteyn2014table}
I.~S. Gradshteyn and I.~M. Ryzhik, {\em Table of integrals, series, and
  products}.
\newblock Academic press, 2014.

\end{thebibliography}

\end{document}